\newtheorem{theorem}{Theorem}[section]
\newtheorem{corollary}[theorem]{Corollary}
\newtheorem{lemma}[theorem]{Lemma}
\newtheorem{fact}[theorem]{Fact}
\theoremstyle{definition}
\newtheorem{definition}[theorem]{Definition}
\newenvironment{fminipage}%
  {\begin{Sbox}\begin{minipage}}%
  {\end{minipage}\end{Sbox}\fbox{\TheSbox}}
\newenvironment{algbox}[0]{\vskip 0.2in
\noindent 
\begin{fminipage}{6.3in}
}{
\end{fminipage}
\vskip 0.2in
}
\def\prob#1#2{\mbox{Pr}_{#1}\left[ #2 \right]}
\def\expec#1#2{{\mathbb{E}}_{#1}\left[ #2 \right]}
\def\abs#1{\left|#1  \right|}
\def\norm#1{\left\| #1 \right\|}
\newcommand\numberthis{\addtocounter{equation}{1}\tag{\theequation}}
\newcommand\pphi{\boldsymbol{\mathit{\phi}}}
\newcommand\PPi{\boldsymbol{\Pi}}
\newcommand\cchi{\boldsymbol{\chi}}
\newcommand\boldone{\boldsymbol{1}}
\newcommand\poly{\mathrm{poly}}
\newcommand\dd{\boldsymbol{\mathit{d}}}
\newcommand\ff{\boldsymbol{\mathit{f}}}
\newcommand\ww{\boldsymbol{\mathit{w}}}
\renewcommand\AA{\textbf{A}}
\newcommand\BB{\textbf{B}}
\newcommand\DD{\textbf{D}}
\newcommand\II{\mathbf{I}}
\newcommand\LL{\mathbf{L}}
\newcommand\XX{\mathbf{X}}
\newcommand\SC{\boldsymbol{\textsc{SC}}}
\newcommand{\vect}[1]{\ensuremath{\mathbf{#1}}}
\newcommand\Htil{\tilde{H}}
\newcommand\Otil{\tilde{O}}
\newcommand\ehat{{\hat{{e}}}}
\newcommand\mhat{{\hat{{m}}}}
\newcommand\nhat{{\hat{{n}}}}
\newcommand\uhat{{\hat{{u}}}}
\newcommand\vhat{{\hat{{v}}}}
\newcommand\er{\mathcal{R}_{\text{eff}}}
\begin{document}

\title{Fully Dynamic Effective Resistances
}

\author{
David Durfee
\footnote{
Emails: \texttt{\{ddurfee,ygao380\}@gatech.edu},
~\texttt{gramoz.goranci@univie.ac.at},
~\texttt{rpeng@cc.gatech.edu}.}
\footnote{This material is based upon work supported by the
National Science Foundation under Grant No. 1718533.}
\\
Georgia Tech
\and
Yu Gao\footnotemark[2]
\\
Georgia Tech
\and
Gramoz Goranci
\footnotemark[1]
\footnote{This work was done while visiting the
Georgia Institute of Technology. The research leading
to these results has received funding from the Marshall Plan Foundation and the European Research Council under the European
Union's Seventh Framework Programme (FP/2007-2013) / ERC Grant Agreement no. 340506.}
\\
University of Vienna
\and
Richard Peng\footnotemark[1]~\footnotemark[2]
\\
Georgia Tech
}

\pagenumbering{gobble}

\maketitle

\begin{abstract}
In this paper we consider the \emph{fully-dynamic} All-Pairs Effective Resistance problem, where the goal is to maintain effective resistances
on a graph $G$  among any pair of query vertices under an intermixed
sequence of edge insertions and deletions in $G$.
The effective resistance between a pair of vertices is a physics-motivated quantity that encapsulates both the congestion and the dilation of a flow.
It is directly related to random walks, and it has been instrumental in the recent works for designing fast algorithms for combinatorial optimization problems, graph sparsification, and network science.

We give a data-structure that maintains $(1 \pm \epsilon)$-approximations to all-pair effective resistances of a fully-dynamic unweighted, undirected multi-graph $G$ with $\tilde{O}(m^{4/5}\epsilon^{-4})$ expected amortized update and query time, against an oblivious adversary.
Key to our result is the maintenance of a dynamic
\emph{Schur complement}~(also known as vertex resistance sparsifier)
onto a set of terminal vertices of our choice.

This maintenance is obtained (1) by interpreting the Schur complement
as a sum of random walks and (2) by randomly picking the vertex subset into which  the sparsifier is constructed. We can then show that each update in the graph affects a small number of such walks, which in turn leads to our sub-linear update time. We believe that this local representation of vertex sparsifiers may be of independent interest.

\end{abstract}


\newpage

\pagenumbering{arabic}

\section{Introduction}
\label{sec:Introduction}

The incorporation of numerical and optimization tools into graph
algorithms has been a highly successful approach in algorithm design.
It is key to the current best results for several fundamental
problems in combinatorial optimization, such as
approximate maximum flow~\cite{ChristianoKMST11,Sherman13,KelnerLOS14,Peng16},
multi-commodity flow~\cite{Sherman17},
shortest path and weighted matching~\cite{CohenMSV17}.
For dynamic graphs undergoing edge modifications,
core components from these algorithms such as graph partitioning,
have also played an important role in recent
developments on dynamic minimum spanning forest with worst-case guarantees~\cite{Wulffnilsen17,NanongkaiSW17,NanongkaiS17}.
The versatility of these tools in the static setting suggests that they can extend to wider ranges of problems on dynamic graphs.

Dynamic graph algorithms seek to maintain solutions to certain problems
on graphs as they undergo edge insertions and deletions in time faster
than recomputing the solution from scratch after each update.
It is a subject that has been studied extensively in data structures,
with problems being maintained including
$2-$ or $3-$connectivity~\cite{EppsteinGIN97,HolmDT01,HolmRW15},
shortest paths~\cite{HenzingerKN14,HenzingerKN16,BernsteinC16,AbrahamCK17},
global minimum cut~\cite{Henzinger97,Thorup07,LackiS11,GoranciHT16},
maximum matching~\cite{OnakR10,GuptaP13,BhattacharyaHN16},
and maximal matching~\cite{Baswana15,NeimanS16,Solomon16}.
On general graphs, these results give sub-linear time only under significant
restrictions to the queries, e.g.,
global minimum cut, $3$-edge connectivity,
size of maximum matching (which equals to the value of
a particular $s-t$ max-flow), and
shortest paths from a fixed source.
Relaxing such restrictions, specifically maintaining solutions to
more general problems on dynamic graphs, is a major area of
ongoing research in data structures.

A more unified view of these problems is through the optimization 
or the numerical algorithms perspective.
In the undirected setting, where much work on dynamic graph algorithms
has taken place, the maximum flow problem and the shortest path problem correspond to minimizing the $\ell_{\infty}$ and the $\ell_{1}$ objectives of flows meeting a certain demand, respectively.
A natural interpolation is the $\ell_2$ objective, otherwise known as the electrical flow problem. In the static setting, $\ell_2$ primitives have been instrumental
in recent works on faster graph 
algorithms~\cite{ChristianoKMST11,Madry13,ChinMMP13, Madry16,AllenZhuLOW17,CohenMTV17}.
The objective value of this flow, effective resistances, is also a well
studied quantity, and has direct applications in
random walks, spanning trees~\cite{MadryST15}, and importance
of graph edges~\cite{SpielmanS08:journal}.
An in-depth treatment of such connections can be found in the monograph
by Doyle and Snell~\cite{DoyleS84}, and a short summary of the connection
between $\ell_p$ norm flows and resistances is given in Appendix~\ref{sec:Flows}.

%

The applications of effective resistances,
and its broader connections with optimization problems on graphs
make it arguably one of the most important primitives in algorithm design.
However, in the dynamic setting, prior to our work, sub-linear time
algorithms for maintaining effective resistances were only known in minor-free
graphs~\cite{GoranciHP17,GoranciHP18:arxiv}.

\paragraph*{Our result.} 

In this paper, we show the first sub-linear time
fully-dynamic algorithm for maintaining approximate
effective resistances in general graphs.
Our algorithm is randomized and the approximation guarantee holds
with probability at least $1-1/\poly(n)$.


\begin{theorem}
\label{thm:Main}
For any given error threshold $\epsilon > 0$,
there is a data-structure for maintaining an unweighted, undirected multi-graph $G=(V,E)$ with up to $m$ edges that supports the following operations
in $\tilde{O}(m^{4/5}\epsilon^{-4})$ expected amortized time:
\begin{itemize}
	\item \textsc{Insert}$(u,v)$: Insert the edge $(u,v)$ with resistance $1$ in $G$.
	\item \textsc{Delete}$(u,v)$: Delete the edge $(u,v)$ from $G$.
	\item \textsc{EffectiveResistance}$(s,t)$: Return a $(1 \pm \epsilon)$-approximation to the effective resistance between $s$ and $t$ in the current graph $G$. 
\end{itemize}
\end{theorem}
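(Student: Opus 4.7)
The plan is to maintain a dynamic spectral sparsifier $\tilde{H}$ of the Schur complement $\mathrm{SC}(L_G, T)$ onto a random terminal set $T$ with $|T| = k$, where $k$ is a parameter to be optimized. Since Schur complements preserve effective resistances between terminals exactly, and a spectral $(1\pm\epsilon)$-sparsifier preserves quadratic forms (hence effective resistances) up to $(1\pm\epsilon)$, this sparsifier will yield the desired approximate answers, after augmenting $T$ with the query endpoints $s$ and $t$ at query time (or, equivalently, doing a few extra walks from $s$ and $t$). Given $\tilde{H}$, a query then reduces to a small Laplacian solve on a graph with roughly $k$ vertices.

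To build $\tilde{H}$, I would use the random-walk representation of the Schur complement: launching random walks from the endpoints of each edge of $G$ and killing them upon hitting $T$ produces, in expectation, the Schur complement. Drawing $\rho = \tilde{O}(\epsilon^{-2})$ walks per edge and invoking a matrix-Chernoff bound for sums of PSD rank-one terms should yield a $(1\pm\epsilon)$-sparsifier with high probability. The crucial advantage of picking $T$ \emph{randomly}, e.g., by including each vertex with probability $k/n$ (or, better, proportional to its degree), is that the expected length of a random walk before hitting $T$ is $\tilde{O}(m/k)$, so each walk can be simulated in that amount of time.

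For updates, the key observation is that inserting or deleting an edge $e$ only invalidates the sampled walks that actually traverse $e$. Since there are $m\rho$ walks of expected length $\tilde{O}(m/k)$, the total number of (walk, edge-traversal) pairs is $\tilde{O}(m^2\rho/k)$, so a \emph{random} edge is touched by $\tilde{O}(m\rho/k)$ walks on average; resampling each such walk from its affected suffix costs $\tilde{O}(m/k)$ per walk, giving an expected amortized update cost of $\tilde{O}(m^2 \rho / k^2)$. Balancing this against the query cost (walks from $s, t$ plus a solve on $\tilde{H}$) and the cost of periodic full rebuilds of the sparsifier should set $k$ to a polynomial like $m^{3/5}$, yielding the target $\tilde{O}(m^{4/5}\epsilon^{-4})$ update/query bound.

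The main obstacle I foresee is controlling walks that behave much worse than their expectation. Two bad cases must be ruled out: walks from an unlucky starting vertex whose hitting time to $T$ is huge (think lollipop-type bottlenecks), and walks that happen to repeatedly traverse a small number of edges (spiking the ``walks through a given edge'' count). The remedy is to truncate walks at some length $L = \tilde{\Theta}(m/k)$ and argue, via hitting-time moment bounds against a \emph{random} $T$, that truncation loses only an inverse-polynomial fraction of walks. A second subtlety is that once the adversary has seen enough answers, the walks used by the data structure are no longer independent of the update sequence; the oblivious-adversary assumption and periodic rebuilds (say every $\tilde{\Theta}(k)$ updates) are needed to restore the independence required by the matrix concentration bound, and the rebuild cost must be amortized into the final trade-off.
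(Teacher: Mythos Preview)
Your high-level architecture matches the paper's exactly: a random terminal set $T$, a random-walk-based approximate Schur complement $H$ onto $T$, a dynamic spectral sparsifier $\tilde{H}$ of $H$, \textsc{AddTerminal} to handle query endpoints, and periodic rebuilds. The gap is quantitative: your claim that a random walk hits a degree-proportionally sampled $T$ of size $k$ in $\tilde{O}(m/k)$ steps is false. On an $n$-vertex path ($m=n-1$), all degrees are $\le 2$, so degree-proportional sampling is essentially uniform; the $k$ terminals are spaced $\Theta(n/k)$ apart, and a random walk needs $\Theta((n/k)^2)=\Theta((m/k)^2)$ steps to cross a gap. Truncating at $\tilde{\Theta}(m/k)$ would therefore discard almost every walk, destroying the sparsifier guarantee rather than losing an inverse-polynomial fraction. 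The paper obtains the correct bound via the Barnes--Feige theorem: a walk of length $\tilde{O}(\beta^{-2})$ visits $\tilde{\Omega}(\beta^{-1})$ \emph{distinct} edges, one of which was sampled into $T$ with high probability. With walk length $L=\tilde{O}(\beta^{-2})=(m/k)^2$, the per-edge load is $\tilde{O}(\beta^{-2}\epsilon^{-2})$ and the update cost becomes $\tilde{O}(\beta^{-4}\epsilon^{-4})$; balancing against the query cost $\tilde{O}(\beta m\,\epsilon^{-4})$ gives $\beta=m^{-1/5}$ (i.e.\ $|T|\approx m^{4/5}$, not $m^{3/5}$) and the stated $\tilde{O}(m^{4/5}\epsilon^{-4})$ bound.

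A second, smaller gap: your load estimate ``a \emph{random} edge is touched by $\tilde{O}(m\rho/k)$ walks on average'' is an average over edges, but the (oblivious) adversary may still pick a structurally bad edge. The paper upgrades this to a per-edge expected bound via a reversibility/stationarity identity: for any vertex $\hat{u}$ and any $t$, $\sum_u \deg(u)\Pr[\hat{u}\in X_u(t)]=\deg(\hat{u})$, which yields expected load $O(L\rho)$ on \emph{every} edge. Finally, note that ``a few extra walks from $s$ and $t$'' is not enough at query time---you would need $\rho$ walks per edge incident to $s$ and $t$; the paper instead truncates all existing walks at $s$ and $t$ (\textsc{AddTerminal}) and charges the truncation cost to the original generation, which is why the rebuild-every-$\beta m$-operations schedule is needed to keep $|T|$ bounded.
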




If we restrict to \emph{simple} graphs,
and only maintain the effective resistances between a small number
of vertex-pairs $(s_{i},t_{i})$,
our algorithm can also give a running time of $\Otil(n^{6/7}\epsilon^{-4})$
per operation, which is truly sub-linear irrespective of graph density.
 We discuss such an improvement in Section~\ref{sec:DependOnN}.

Our algorithm is motivated by two sequences of recent results.
For special family of graphs, e.g. planar graphs, recent works~\cite{GoranciHP17,GoranciHP18:arxiv} showed data structures for maintaining approximate effective resistances with $\Otil(\sqrt{n}\epsilon^{-2})$
\footnote{We use the notation of $\Otil(f(n))$ to hide polylog factors.
Specifically, for a function $f(n)$, some error parameter $\epsilon$, and some  constant $c > 0$, we have $\Otil(f(n)) = O(f(n) \log^{c}f(n) \log^{c} (1/\epsilon))$.} update and query time.
On the other hand, concurrent results on generating random spanning trees~\cite{DurfeeKPRS17} and computing network centrality parameters~\cite{LiZ18} implicitly rely on the ability to answer effective resistance queries along with edge deletions or contractions offline.

The key algorithmic component behind our result is the efficient maintenance of a graph reduction to a random subset of vertices,
picked uniformly from the endpoints of edges.
To this end, we leverage recent results based on vertex
elimination schemes for solving linear systems in
graph Laplacians~\cite{KyngLPSS16,KyngS16}
and generating random spanning trees~\cite{DurfeePPR17}.
Specifically, we combine random walk based methods for
generating effective resistance preservers on terminals
with results in combinatorics that bound the speed at which
such walks spread among vertices~\cite{BarnesF96}.
By showing that these walks are sufficiently local, we conclude
that each update only affects a small number of such walks, and those can in turn be resampled efficiently. 
To the best of our knowledge, we are the first to utilize the behavior
of random walks in data structures for maintaining graphs that undergo edge insertions and deletions.

Finally, we remark that the effective resistance between a pair of vertices
is finite only if they are connected.
Thus, our data structure also provides the first scheme
for maintaining connectivity in dynamic graphs that does not utilize
data structures for maintaining trees under edge updates,
such as link-cut trees~\cite{SleatorT83,SleatorT85},
Euler-tour trees~\cite{HenzingerK95},
or Top-trees~\cite{AlstrupHLT05}.
The many extensions of ideas stemming from dynamic connectivity,
and the wide range of applications of random walks make us optimistic that our
algorithmic ideas could be useful for dynamically maintaining other
important graph properties.

\subsection{Related Works}
\label{subsec:related}

The recent data structures for maintaining effective resistances in planar
graphs~\cite{GoranciHP17,GoranciHP18:arxiv} drew direct connections between
Schur complements and data structures for maintaining them
in dynamic graphs.
This connection is due to the  preservation of effective resistances
under vertex eliminations (Fact~\ref{fact:SchurER}).
From this perspective, the Schur complement can be viewed as a
vertex sparsifier for preserving resistances among a set of terminal vertices.

The power of vertex or edge graph sparsifiers,
which preserve certain properties while reducing problem sizes,
has long been studied in data structures~\cite{Eppstein91,EppsteinGIN97}.
Ideas from these results are central to recent works on offline
maintenance for $3$-connectivity~\cite{PSS17:arxiv}, generating
random spanning trees~\cite{DurfeeKPRS17}, and new notions of centrality for networks~\cite{LiZ18}.
Our result is the first to maintain such vertex sparsifiers,
specifically Schur complements, for \emph{general} graphs in online settings.

While the ultimate goal is to dynamically maintain (approximate)
minimum cuts and maximum flows, 
effective resistances represent a natural `first candidate' for this
direction of work due to them having perfect vertex sparsifiers.
That is, for any subset of terminals, there is a sparse graph on them
that approximately preserves the effective resistances among all
pairs of terminals.
This is in contrast to distances, where it's not known whether
such a graph can be made sparse, or in contrast to cuts, where the existence of
such a dense graph is not known.


\paragraph*{Dynamic Graph Algorithms.}

The maintenance of properties related to paths in dynamic graphs
is a well studied topic on data structures~\cite{Frederickson85,
EppsteinGIN97,HolmDT01,KapronKM13,Wulffnilsen17,NanongkaiSW17,NanongkaiS17}.
A key difficulty facing paths on graphs is that general graphs are not
decomposable: piecing together connectivity information from an arbitrary
partition of a graph is difficult, and there are classes of graphs such as expanders that are not partitionable.

Dynamic algorithms for evaluating algebraic functions such as matrix determinant and matrix inverse has also been considered~\cite{Sankowski04}. One application of such algorithms is that they can be used to dynamically maintain single-pair effective resistance. Specifically, using the dynamic matrix inversion algorithm, one can dynamically maintain \emph{exact} $(s,t)$-effective resistance in $O(n^{1.575})$ update time and $O(n^{0.575})$ query time.


\paragraph*{Vertex Sparsifiers.}

Vertex sparsifiers have been studied in more general settings for
preserving cuts and flows among terminal
vertices~\cite{Moitra09,CharikarLLM10,KrauthgamerR13}.
Efficient versions of such routines have direct applications in
data structures, even when they only work in restricted
settings: terminal sparsifiers on quasi-bipartite graphs~\cite{AndoniGK14}
were core routines in the data structure for
maintaining flows in bipartite undirected graphs~\cite{AbrahamDKKP16}.

Our data structure utilizes vertex sparsifiers, but in even more
limited settings as we get to control the set of vertices to sparsify onto.
Specifically, the local maintenance of this sparsifier under insertions
and deletions hinges upon the choice of a random subset of terminals,
while vertex sparsifiers usually need to work for any subset of terminals.
Evidence from numerical algorithms~\cite{KyngLPSS16,DurfeePPR17} suggest
this choice can significantly simplify interactions between algorithmic
components.
We hope this flexibility can motivate further studies of vertex sparsifiers
in more restrictive, but still algorithmically useful settings.

\paragraph*{Organization. }

The paper is organized as follows.
We discuss preliminaries in Section~\ref{sec:Preliminaries},
after which we present our data-structure in Section~\ref{sec:Overview}.
The key properties of random walks that we use are given in Section~\ref{sec:propertiesRandomWalk}, and
we show the dynamic maintenance of approximate Schur complements
in Section~\ref{sec:Dynamic}.
In Section~\ref{sec:DependOnN}, we briefly discuss an alternate
way of analyzing our data structure's performances in terms of the
number of vertices.
In Appendix~\ref{sec:SchurComplement}, we provide details on the
graph approximation guarantees that our random walk sampling routines
rely on.
Finally, in Appendix~\ref{sec:Flows} we provide brief details on
the $p-$norm flow formulations of shortest paths, maximum flows,
and effective resistances.

\section{Preliminaries}
\label{sec:Preliminaries}

In our dynamic setting, an undirected, unweighted multi-graph undergoes both insertions and deletions of edges. We let $G = (V, E)$ always refer to the \emph{current} version of the graph.  
We will use $n$ and $m$  to denote bounds on the number
of vertices and edges at any point, respectively.

A \emph{walk} in $G$ is a sequence of vertices such that
consecutive vertices are connected by edges.
A \emph{random walk} in $G$ is a walk that starts at
a starting vertex $v_0$, and at step $i \geq 1$, the vertex $v_{i}$
is chosen randomly among the neighbors of $v_{i - 1}$.


For an unweighted, undirected multi-graph $G$, let $\AA_G$ denote its adjacency matrix and let $\DD_G$ its degree diagonal matrix~(counting edge multiplicities for both matrices). The graph \emph{Laplacian} $\LL_{G}$ of $G$ is then defined as $\LL_G = \DD_G-\AA_G$. Let $\LL_G^{\dag}$ denote the Moore-Penrose pseudo-inverse of $\LL_{G}$. We also need to define the indicator vector $\boldone_{u} \in \mathbb{R}^{V}$ of a vertex $u$ such that $\boldone_u(v) = 1$ if $v = u$, and $\boldone_u(v) = 0$ otherwise. 

\paragraph*{Effective Resistance}

For our algorithm, it will be useful to define effective resistance using linear algebraic structures. Specifically, given any two vertices $u$ and $v$ in $G$, if $\cchi(u,v) := \boldone_u - \boldone_v$, then the \emph{effective resistance} between $u$ and $v$ is given by
\[
\er^{G}\left(u, v \right)
:=
\cchi_{u, v}^{T}
\LL_{G}^{\dag}
\cchi_{u, v}.
\]

Linear systems in graph Laplacian matrices can be solved in
nearly-linear time~\cite{SpielmanTengSolver:journal}.
One prominent application of these solvers is the approximation of
effective resistances.

\begin{lemma} \label{lemm:efficientEffectiveResistance}
Fix $\epsilon \in (0,1)$  and let $G=(V,E)$ be any graph with two arbitrary distinguished vertices $u$ and $v$. There is an algorithm that computes a value $\phi$ such that
\[
	\normalfont (1-\epsilon)\er^G(u,v) \leq \phi \leq (1+\epsilon)\er^G(u,v),
\]
in $\tilde{O}(m \epsilon^{-2})$ time with high probability.
\end{lemma}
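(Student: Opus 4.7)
The plan is to reduce the problem to a single approximate solve of a graph Laplacian system, using the algebraic identity $\er^G(u,v) = \cchi_{u,v}^T \LL_G^\dag \cchi_{u,v}$. Concretely, I would invoke a nearly-linear time Laplacian solver of the style of~\cite{SpielmanTengSolver:journal} on the system $\LL_G \xx = \cchi_{u,v}$ with relative target error $\epsilon$ to obtain, in $\tilde{O}(m \log(1/\epsilon))$ time, a vector $\tilde{\xx}$ satisfying
\[
	\| \tilde{\xx} - \LL_G^\dag \cchi_{u,v} \|_{\LL_G} \le \epsilon \cdot \| \LL_G^\dag \cchi_{u,v} \|_{\LL_G},
\]
and return $\phi := \cchi_{u,v}^T \tilde{\xx}$. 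A preliminary linear-time connectivity check handles the case where $u$ and $v$ sit in different components of $G$, in which case $\er^G(u,v) = \infty$ by convention.

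The correctness analysis is a Cauchy-Schwarz step in the $\LL_G$-inner product. Setting $\xx^* := \LL_G^\dag \cchi_{u,v}$, connectedness of $u$ and $v$ forces $\cchi_{u,v} \perp \ker(\LL_G)$, so $\LL_G \xx^* = \cchi_{u,v}$. Then
\[
	\phi - \er^G(u,v)
	= \cchi_{u,v}^T (\tilde{\xx} - \xx^*)
	= (\xx^*)^T \LL_G (\tilde{\xx} - \xx^*),
\]
and Cauchy-Schwarz in the $\LL_G$-inner product gives
\[
	| \phi - \er^G(u,v) |
	\le \| \xx^* \|_{\LL_G} \cdot \| \tilde{\xx} - \xx^* \|_{\LL_G}
	\le \epsilon \cdot \| \xx^* \|_{\LL_G}^2
	= \epsilon \cdot \er^G(u,v),
\]
where the last equality uses $\| \xx^* \|_{\LL_G}^2 = (\xx^*)^T \cchi_{u,v} = \er^G(u,v)$. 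This is precisely the claimed $(1 \pm \epsilon)$-approximation, and the time bound of $\tilde{O}(m \log(1/\epsilon))$ sits well within the stated $\tilde{O}(m \epsilon^{-2})$ budget.

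The main technical point, and the only step that truly requires care, is translating the solver's relative error in $\|\cdot\|_{\LL_G}$ into a relative error on the scalar quantity $\cchi_{u,v}^T \LL_G^\dag \cchi_{u,v}$. The enabling observation is that $\cchi_{u,v}^T \LL_G^\dag \cchi_{u,v} = \| \LL_G^\dag \cchi_{u,v} \|_{\LL_G}^2$, so the $\LL_G$-norm is exactly the right scale in which the error is measured. An alternative route that also meets the stated bound is the Spielman-Srivastava construction~\cite{SpielmanS08:journal}: one applies a Johnson-Lindenstrauss projection to the edge-vertex incidence matrix $\BB$ (with $\LL_G = \BB^T \BB$) and solves $\tilde{O}(\epsilon^{-2})$ Laplacian systems to build a sketch answering any pair query in $O(\epsilon^{-2} \log n)$ time post-processing; this is the more natural choice when many queries must be supported on the same graph, and is the source of the $\epsilon^{-2}$ dependence in the stated bound.
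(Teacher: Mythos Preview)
Your argument is correct: a single nearly-linear Laplacian solve on $\LL_G \xx = \cchi_{u,v}$ with relative $\LL_G$-norm error $\epsilon$, followed by the Cauchy--Schwarz step you wrote, yields a $(1\pm\epsilon)$-approximation to $\er^G(u,v)$ in $\tilde{O}(m\log(1/\epsilon))$ time, comfortably within the stated budget. The identity $\|\LL_G^\dag\cchi_{u,v}\|_{\LL_G}^2 = \er^G(u,v)$ is exactly what makes the error scales line up, and you used it correctly.

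There is nothing to compare against on the paper's side: the lemma is stated in the preliminaries as a known consequence of nearly-linear Laplacian solvers (the surrounding text cites~\cite{SpielmanTengSolver:journal}) and is not proved in the paper. Your write-up supplies the standard justification. Your closing remark is also apt: the $\epsilon^{-2}$ in the stated bound is most naturally explained by the Spielman--Srivastava sketch~\cite{SpielmanS08:journal}, which uses $\tilde{O}(\epsilon^{-2})$ solves to support arbitrary pair queries; for the single-pair statement as written, your one-solve argument actually gives a slightly better dependence on $\epsilon$.
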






\paragraph*{Schur complement.}

Given a graph $G=(V,E)$, we can think of the \emph{Schur complement} as the partially eliminated state of $G$. This relies on some partitioning of $V$ into two disjoint subset of vertices $T$ and $F$, which in turn partition the Laplacian $\LL_G$ into $4$ blocks:
\[
\LL
:=
\left[
\begin{array}{cc}
\LL_{\left[F, F\right]}
&
\LL_{\left[F, T\right]}\\
\LL_{\left[T, F\right]}
&
\LL_{\left[T, T\right]}
\end{array}
\right].
\]

The Schur complement onto $T$, denoted by
$\textsc{SC}(G, T)$ is the matrix after eliminating
the variables in $F$.
Its closed form is given by
\[
\SC\left(G, T \right)
=
\LL_{\left[T, T\right]}
-
\LL_{\left[T, F\right]}
\LL_{\left[F, F\right]}^{\dag}
\LL_{\left[F, T\right]}.
\]

It is well known that $\SC(G,T)$ is a Laplacian matrix of a graph on vertices in $T$. To simplify our exposition, we let $\SC(G,T)$ denote both the Laplacian and its corresponding graph. 

In this work, we will not utilize the above algebraic expression of Schur complement. Instead, our algorithm is built upon a view of the Schur complement as a collection of random walks. This particular view we be discussed in more details in Section~\ref{sec:Overview}.

The key role of Schur complements in our algorithms
stems from the fact that they can be viewed as vertex sparsifiers that preserve pairwise effective resistances~(see e.g.,~\cite{GoranciHP17}).
\begin{fact}[Vertex Resistance Sparsifier]
\label{fact:SchurER}
For any graph $G=(V,E)$, any subset of vertices $T$,
and any pair of vertices $u, v \in T$,
\[ \normalfont
\er^{G}\left(u, v \right)
=
\er^{\textsc{SC}\left(G, T \right)}
\left(u, v\right).
\]
\end{fact}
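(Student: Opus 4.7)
The plan is to reduce the claim to the block-matrix identity that expresses Schur complementation as partial Gaussian elimination on $\LL_G$, and then to apply this identity to the particular test vector $\cchi_{u,v}$. The key observation that makes this work is that $\cchi_{u,v}$ is supported entirely on $T$ whenever $u, v \in T$, so the ``external'' load seen by the eliminated block $F$ is zero.

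First I would set up notation and recall that, since $\LL_G$ is a symmetric PSD matrix whose kernel (on each connected component) is spanned by the all-ones vector, the quantity $\er^G(u,v) = \cchi_{u,v}^T \LL_G^{\dag} \cchi_{u,v}$ can equivalently be written as $\cchi_{u,v}^T \xx$, where $\xx$ is any solution of $\LL_G \xx = \cchi_{u,v}$. Writing $\xx = (\xx_F, \xx_T)$ and using the block form of $\LL_G$, the equation $\LL_G \xx = (\mathbf{0}_F, \cchi_{u,v})$ decouples into
\[
\LL_{[F,F]} \xx_F + \LL_{[F,T]} \xx_T = \mathbf{0},
\qquad
\LL_{[T,F]} \xx_F + \LL_{[T,T]} \xx_T = \cchi_{u,v}.
\]
Since $T$ is nonempty and $G$ is connected on each component intersecting $T$, the principal submatrix $\LL_{[F,F]}$ is strictly diagonally dominant on at least one row of every component and hence invertible, which lets me solve $\xx_F = -\LL_{[F,F]}^{-1}\LL_{[F,T]}\xx_T$ and substitute to obtain
\[
\bigl(\LL_{[T,T]} - \LL_{[T,F]}\LL_{[F,F]}^{-1}\LL_{[F,T]}\bigr)\xx_T = \cchi_{u,v},
\]
which is precisely $\SC(G,T)\,\xx_T = \cchi_{u,v}$.

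From here I would conclude by taking the inner product of both sides of the identity $\LL_G \xx = (\mathbf{0}_F, \cchi_{u,v})$ with $\xx$ itself, or more directly by observing that $\cchi_{u,v}^T \xx = \cchi_{u,v}^T \xx_T$ (since $\cchi_{u,v}$ vanishes on $F$), so that
\[
\er^G(u,v) \;=\; \cchi_{u,v}^T \xx_T \;=\; \cchi_{u,v}^T \SC(G,T)^{\dag} \cchi_{u,v} \;=\; \er^{\SC(G,T)}(u,v),
\]
where the middle equality uses that $\SC(G,T)$ is itself a graph Laplacian with kernel spanned by the all-ones vector on $T$, and $\cchi_{u,v}$ is orthogonal to that kernel.

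The main obstacle I expect is the bookkeeping around the pseudo-inverses: one has to verify that $\LL_{[F,F]}$ is genuinely invertible (not merely pseudo-invertible), and that $\xx_T$ obtained from the eliminated system is in the image of $\SC(G,T)$ so that $\xx_T = \SC(G,T)^{\dag}\cchi_{u,v}$ up to a harmless shift by the constant vector on $T$. Both of these are standard but need the connectivity hypothesis to be stated cleanly (the claim implicitly assumes $u$ and $v$ lie in the same connected component, since otherwise both resistances are $+\infty$ and the identity still holds trivially). The rest is just block matrix algebra.
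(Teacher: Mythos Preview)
Your argument is correct and is the standard block-elimination proof of this identity. The paper, however, does not prove Fact~\ref{fact:SchurER} at all: it is stated as a known fact with a citation to prior work (``see e.g.,~\cite{GoranciHP17}''), so there is no paper-side proof to compare against. Your proposal would serve as a self-contained justification; the only minor point worth tightening is that the paper's definition of $\SC(G,T)$ uses $\LL_{[F,F]}^{\dag}$ rather than $\LL_{[F,F]}^{-1}$, so to match that definition you should either (i) argue, as you sketch, that $\LL_{[F,F]}$ is invertible whenever every connected component of $G$ meets $T$, or (ii) observe that components of $G$ lying entirely in $F$ contribute nothing to either side of the identity and can be discarded, after which invertibility holds on the remainder.
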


\paragraph*{Spectral Approximation} 

\begin{definition}[Spectral Sparsifier] \label{def: specSpar} Given a graph $G=(V,E)$ and $\epsilon \in (0,1)$, we say that a graph $H=(V,E')$ is a $(1 \pm \epsilon)$-\emph{spectral sparsifier} of $G$ (abbr. $H \approx_{\epsilon} G$) if $E' \subseteq E$, and for all $\vect{x} \in \mathbb{R}^{n}$ 
	\[ (1-\varepsilon)\vect{x}^{T}\LL_{G}\vect{x} \leq \vect{x}^{T}{\LL_{H}}\vect{x} \leq (1+\varepsilon)\vect{x}^{T}\LL_{G}\vect{x}. \]
\end{definition}

In the dynamic setting, Abraham et al.~\cite{AbrahamDKKP16} recently
showed that $(1\pm\epsilon)$-spectral sparsifiers of a dynamic graph $G$
can be maintained efficiently.

\begin{lemma}[\cite{AbrahamDKKP16},~Theorem 4.1]
\footnote{Version 1~\url{https://arxiv.org/pdf/1604.02094v1.pdf}.}
\label{lem:DynamicSpectralSparsifier}
Given a graph $G$ with polynomially bounded edge weights, with high probability, we can dynamically maintain a $(1 \pm \epsilon)$-spectral sparsifier of size $\Otil(n \epsilon^{-2})$ of $G$ in $O(\log^{9} n \epsilon^{-2})$ expected amortized time per edge insertion or deletion. The running time guarantees hold against an oblivious adversary.
\end{lemma}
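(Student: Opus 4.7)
The plan is to build the dynamic sparsifier on top of the classical effective-resistance sampling scheme of Spielman--Srivastava, and then argue that one can keep this sample approximately consistent with the current graph using only polylogarithmic amortized work per update. Concretely, the static foundation is: if each edge $e$ is kept independently with probability $p_e \ge \min(1, C \epsilon^{-2} \log n \cdot r_e)$, where $r_e$ is the leverage score (effective resistance times weight), and reweighted by $1/p_e$, then a matrix Chernoff bound gives a $(1\pm\epsilon)$ spectral sparsifier of size $\Otil(n \epsilon^{-2})$. The dynamic challenge is that both the set of edges and their leverage scores evolve, so I would maintain a \emph{hierarchy} of sparsifiers rather than a single flat sample.

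First, I would bucket the edges into $O(\log n)$ resistance levels, where level $i$ consists of edges whose current leverage score lies in a geometric range $[2^{-i-1}, 2^{-i}]$. For each level I maintain an independent Bernoulli sub-sample with the level-appropriate rate, together with a recursively sparsified ``background'' Laplacian that certifies the leverage-score estimates used to decide which level each edge belongs to. An insertion is handled by estimating the new edge's leverage score against the current (approximate) sparsifier, placing it in the correct level, and flipping the corresponding coin; this costs only polylog time per edge using a Laplacian solver combined with Johnson--Lindenstrauss projection, which is where the $\epsilon^{-2}$ factor enters cleanly. A deletion is the hard direction: if the deleted edge was in the sparsifier, the sparsifier itself breaks, and even if it was not, the leverage scores of nearby edges may drift, eventually invalidating their level assignments.

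To control deletions I would use a standard ``rebuild when stale'' amortization: attach to each level a potential counting how many edges have been deleted since that level was last resampled, and rebuild the level from scratch (with fresh randomness against the current graph, leveraging the oblivious adversary assumption) once the potential crosses a threshold proportional to the expected level size. A matrix Chernoff union bound over the at most $n^{O(1)}$ possible snapshots in between rebuilds shows the sparsifier remains $(1\pm\epsilon)$ throughout the epoch, and charging the cost of each rebuild to the updates in the epoch gives $O(\log^{O(1)} n \cdot \epsilon^{-2})$ amortized time per operation. The obliviousness of the adversary is essential here, because otherwise an adversary could target exactly the edges that are currently in the sample and force frequent rebuilds.

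The main obstacle I anticipate is making the leverage-score estimates used during insertions consistent with the ones used to define the buckets after many updates: errors can compound across levels of the hierarchy. I would address this by keeping the approximation budget $\epsilon$ strictly halved at each recursive level, so that the $O(\log n)$-deep recursion still produces a global $(1\pm\epsilon)$ guarantee, and by invoking Lemma~\ref{lemm:efficientEffectiveResistance} on the already-sparsified background graph (whose size is $\Otil(n\epsilon^{-2})$) rather than on $G$ itself, keeping the per-operation work polylogarithmic. Combining the sampling analysis, the rebuild amortization, and the recursive error control then yields the stated $O(\log^{9} n \cdot \epsilon^{-2})$ expected amortized update time and $\Otil(n \epsilon^{-2})$ sparsifier size with high probability.
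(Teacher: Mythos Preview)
This lemma is not proved in the present paper at all: it is quoted as a black box from \cite{AbrahamDKKP16} (their Theorem~4.1), and the paper simply invokes it as an off-the-shelf data structure inside the proof of Theorem~\ref{thm:Main}. So there is no ``paper's own proof'' to compare your proposal against; any sketch you write here is reconstructing the cited work, not anything in this manuscript.

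For what it is worth, your sketch is a reasonable high-level story for dynamic spectral sparsification, but it is not the mechanism actually used in \cite{AbrahamDKKP16}. Their construction does not estimate leverage scores via Laplacian solves and Johnson--Lindenstrauss at update time; instead it maintains a bundle of $O(\epsilon^{-2}\log n)$ dynamic $t$-spanners (each maintainable in polylog amortized time by prior work), and uses membership counts across these spanners as a combinatorial proxy for effective resistance when deciding sampling probabilities, following the static Koutis--Levin--Peng framework. That is where the specific $\log^{9} n$ exponent comes from. Your ``bucket by leverage level and rebuild when stale'' plan is coherent but would need a separate dynamic primitive for fast per-edge leverage estimation, which is exactly what the spanner bundle sidesteps.
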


The above result is useful because matrix approximations also
preserve approximations of their quadratic forms. As a consequence of this fact, we get the following lemma.

\begin{lemma} \label{lem:ApproxER} If $H$ is a $(1 \pm \epsilon)$-spectral sparsifier of $G$, then for any pair of vertices $u$ and $v$
	\[
	(1-\varepsilon)R_G(u,v) \leq R_H(u,v) \leq (1+\varepsilon) R_G(u,v).
	\]
\end{lemma}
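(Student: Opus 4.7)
The plan is to prove the lemma by a direct Loewner-order argument: express both effective resistances as quadratic forms in the pseudoinverses of the respective Laplacians, then pass from the spectral inequality $\LL_H \approx_\epsilon \LL_G$ to an equivalent spectral inequality on $\LL_H^\dag$ and $\LL_G^\dag$, and finally evaluate both sides at the vector $\cchi_{u,v} = \boldone_u - \boldone_v$.

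First I would observe that the spectral-sparsifier definition already implies that $\LL_H$ and $\LL_G$ share the same kernel. Indeed, any $\xx$ with $\xx^T \LL_G \xx = 0$ must also satisfy $\xx^T \LL_H \xx = 0$ by the upper bound $\xx^T \LL_H \xx \leq (1+\epsilon) \xx^T \LL_G \xx$, and conversely by the lower bound. In particular, the connected components of $G$ and $H$ coincide, and the test vector $\cchi_{u,v}$ lies in the image of $\LL_G$ iff it lies in the image of $\LL_H$ (both iff $u,v$ are in the same component). We may assume they are, since otherwise both effective resistances are $+\infty$ and the claim is trivially true.

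Next I would invoke the standard fact that, for positive semidefinite matrices with the same kernel, the pseudoinverse reverses Loewner order: if $(1-\epsilon)\LL_G \pleq \LL_H \pleq (1+\epsilon)\LL_G$, then on the common image,
\[
\frac{1}{1+\epsilon}\,\LL_G^{\dag} \pleq \LL_H^{\dag} \pleq \frac{1}{1-\epsilon}\,\LL_G^{\dag}.
\]
This is a short calculation: conjugate the hypothesis by $\LL_G^{\dag/2}$ (defined on the image) to reduce to the scalar fact that $1/t$ is order-reversing on $(0,\infty)$, then conjugate back. Evaluating the resulting quadratic inequality at $\cchi_{u,v}$ gives
\[
\frac{1}{1+\epsilon}\,\er^{G}(u,v) \leq \er^{H}(u,v) \leq \frac{1}{1-\epsilon}\,\er^{G}(u,v).
\]

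Finally I would convert the reciprocal factors into the additive form stated in the lemma. For $\epsilon \in (0,1)$, the lower bound is immediate from $1/(1+\epsilon) \geq 1-\epsilon$. The upper bound $1/(1-\epsilon) \leq 1+O(\epsilon)$ requires a mild reparameterization (e.g.\ replacing the sparsifier accuracy $\epsilon$ with $\epsilon/3$ before invoking the argument), which only changes constants in the $\Otil(\cdot)$ bounds elsewhere in the paper. The only delicate step is the pseudoinverse inversion, which is why I highlighted the same-kernel verification; everything else is a direct substitution into the definition of effective resistance.
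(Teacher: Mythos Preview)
Your argument is correct and matches the paper's approach: the paper does not give a detailed proof of this lemma but simply remarks that it follows because ``matrix approximations also preserve approximations of their quadratic forms,'' which is exactly the Loewner-order-on-pseudoinverses step you carry out. Your observation about the $1/(1-\epsilon)$ versus $1+\epsilon$ discrepancy is accurate and is the usual harmless constant-factor slack absorbed into $\epsilon$ throughout this literature.
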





\section{Overview and Data Structure}
\label{sec:Overview}

In this section we start by discussing the key high level invariants that we maintain throughout our data structure. We then continue by describing how to use these invariants to dynamically maintain approximate Schur complements. Finally, we show that this leads to an algorithm for maintaining effective resistance, and thus proves our main result in Theorem~\ref{thm:Main}.



We now review two natural attempts for addressing our problem.
\begin{itemize}
\item First, since spectral sparsifiers preserve effective resistances (Lemma~\ref{lem:ApproxER}), we could dynamically maintain a spectral sparsifier~(Lemma~\ref{lem:DynamicSpectralSparsifier}), and then compute the $(s,t)$ effective resistance on this sparsifier. This leads to a data structure with $\poly(\log n, \epsilon^{-1})$ update time
and $\Otil(n \epsilon^{-2})$ query time.
\item Second, by the preservation of effective resistances under
Schur complements (Fact~\ref{fact:SchurER}), we could also utilize Schur complements to obtain a faster query time among a set of $\beta m$
terminals, $T$, for some reduction factor $\beta \in (0,1)$,
at the expense of a slower update time.
Specifically, after each edge update, we recompute an approximate Schur complement of the sparsifier onto $T$ in
time~$\Otil(m \epsilon^{-2})$~\cite{DurfeeKPRS17},
after which each query takes $\Otil(\beta m \epsilon^{-2})$ time.
\end{itemize}

The first approach obtains sublinear update time, while the second
approach gives sublinear query time. Our algorithm stems from combining these two methods,
with the key additional observation being that adding more vertices
to $T$ makes the Schur complement algorithm more local.

The running time bottleneck then becomes computing and maintaining
$\SC(G, T)$ under edge updates to $G$.
To speed up this process, we take a more local interpretation of
a sparsifier of $\textsc{SC}(G, T)$ as a collection of random walks, each
starting at an edge of $G$ and terminating in $T$.
In Appendix~\ref{sec:SchurComplement},
we review the following result, which is implicit in previous works on block
elimination based algorithms for estimating determinants~\cite{DurfeePPR17}.

\begin{restatable}{theorem}{SparsifySchur}
\label{thm:SparsifySchur}
Let $G=(V,E)$ be an undirected, unweighted multi-graph with a subset of vertices $T$. Furthermore, let $\epsilon \in (0,1)$, and let $\rho$ be some parameter related to the concentration of sampling given by
\[
\rho = O\left( \log{n}  \epsilon^{-2} \right).
\]
Let $H$ be an initially empty graph, and for every edge $e=(u,v)$ of $G$,
repeat $\rho$ times the following procedure:
\begin{enumerate}
\item Simulate a random walk starting from $u$ until
it hits $T$ at vertex $t_1$,
\item Simulate a random walk starting from $v$ until
it hits $T$ at vertex $t_2$,
\item Let the total length of this combined walk (including edge $e$)
be $\ell$.
Add the edge $(k_1, k_2)$ to $H$ with weight
\[
\frac{1}{\rho \ell}.
\]
\end{enumerate}
The resulting graph $H$ satisfies
$\normalfont \LL_H \approx_{\epsilon} \SC(G,T)$
with high probability.
\end{restatable}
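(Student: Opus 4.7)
The plan is to prove the theorem in two steps: first establish the unbiasedness identity $\E[\LL_H] = \LL_{\SC(G,T)}$, and then apply a matrix Chernoff-type bound to upgrade unbiasedness to spectral approximation. Both pieces are reasonably standard once the right interpretation of the random-walk sampling is in place.

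For the unbiasedness step, the key ingredient is the classical random-walk representation of the Schur complement. A single sample from the procedure on edge $e=(u,v)$ produces a random walk $t_1 \rightsquigarrow u \to v \rightsquigarrow t_2$ in $G$ whose interior vertices all lie in $F := V \setminus T$ and which crosses $e$ exactly once. Its contribution to $\LL_H$ is $\frac{1}{\rho \ell} \LL_{(t_1,t_2)}$, where $\LL_{(t_1,t_2)} = \cchi_{t_1,t_2}\cchi_{t_1,t_2}^T$ is the edge Laplacian. After summing over the $\rho$ trials and all $m$ edges, the identity to verify becomes
\[
\LL_{\SC(G,T)} \;=\; \sum_{e \in E} \E\!\left[\frac{1}{\ell}\, \LL_{(t_1,t_2)}\right],
\]
which can be established either by induction on an elimination ordering (a single-vertex elimination introduces edges whose weights match length-two transition probabilities out of the eliminated vertex, and composition of eliminations matches concatenation of random walks) or by directly expanding $\LL_{[T,F]} \LL_{[F,F]}^{\dag} \LL_{[F,T]}$ as a sum over walks through $F$ of every length.

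For the concentration step, I would apply a matrix Chernoff bound to the $m\rho$ i.i.d.\ samples $\YY_{e,i} := \tfrac{1}{\rho \ell_{e,i}} \LL_{(t_1,t_2)}$ preconditioned by $\LL_{\SC(G,T)}^{\dag/2}$. The crucial bound is on the spectral norm of a single preconditioned sample, which for a rank-one edge Laplacian evaluates to $\frac{1}{\rho \ell_{e,i}} \cdot \er^{\SC(G,T)}(t_1,t_2)$. By Fact~\ref{fact:SchurER} this equals $\frac{1}{\rho \ell_{e,i}} \cdot \er^{G}(t_1,t_2)$, and Rayleigh's monotonicity bounds the latter by the resistance of the sampled walk itself, namely $\ell_{e,i}$ (the walk's edges form a subgraph of $G$ containing a length-$\ell_{e,i}$ path of unit resistors from $t_1$ to $t_2$). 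Thus every sample has preconditioned spectral norm at most $1/\rho$, so the standard matrix Chernoff inequality yields $(1\pm\epsilon)$ spectral approximation with high probability once $\rho = \Omega(\log n / \epsilon^2)$, matching the value in the statement.

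The main obstacle I expect is the careful verification of the unbiasedness identity. Two bookkeeping issues need attention: first, $G$ is a multi-graph, so each parallel copy of an edge must be sampled independently and counted separately in the sum; second, an endpoint of a sampled edge may itself lie in $T$, in which case the corresponding half-walk has length zero and must be handled as the natural base case of the induction. A minor secondary technicality is applying matrix Chernoff with a pseudo-inverse preconditioner; this is handled by restricting to the orthogonal complement of the kernel of $\LL_{\SC(G,T)}$, which coincides with the kernel of $\LL_G$ on the relevant connected components.
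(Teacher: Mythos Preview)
Your proposal is correct and follows essentially the same approach as the paper: the paper also establishes unbiasedness via the random-walk decomposition of the Schur complement (it cites this as a fact from~\cite{DurfeePPR17} rather than rederiving it by elimination induction), and then applies a matrix-Chernoff-style bound (packaged as Corollary~\ref{cor:Sparsify}) using exactly your leverage-score estimate $\tfrac{1}{\rho\ell}\cdot \er^{G}(t_1,t_2)\le \tfrac{1}{\rho}$, which the paper phrases via the triangle inequality for effective resistances (Fact~\ref{fact:ERBound}) rather than Rayleigh monotonicity. The only structural difference worth noting is that the paper's expectation computation is organized slightly differently: it starts from the sum-over-terminal-free-walks formula for $\SC(G,T)$ and splits each length-$\ell$ walk into $\ell$ copies, one indexed by each edge on the walk, so that the $1/\ell$ factor arises from this splitting rather than from the ``edge-in-the-middle'' viewpoint you describe.
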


The output approximate Schur complement $H$ onto $T$ has up to
$\rho m = \tilde{O}(m\epsilon^{-2})$ edges~(that is, $T$ for each edge in $G$).
However, as we will show, $H$ does not change too much upon inserting
or deleting an edge in $G$.
Therefore, we can maintain these changes using a dynamic spectral 
sparsifier $\Htil$ of $H$, and whenever a query comes, we answer 
it on $\Htil$ in $\Otil(\abs{T} \epsilon^{-2}) = \Otil(\beta m \epsilon^{-2})$ time.

Thus the bulk of our effort is devoted to generating and maintaining
the random walks described in Theorem~\ref{thm:SparsifySchur}.
Specifically, upon insertion or deletion of an edge $e = (u,v)$ in $G$,
we only need to regenerate walks that pass through $u$ or $v$.
The cost of this depends on both the length of a walk,
as well as the maximum number of walks that passes
through a vertex $u$ (which we will refer to as the \emph{load} of $u$).
For $T$ picked arbitrarily, e.g., the leftmost $n / 2$ vertices
of a length $n$ path, both of these parameters can be large:
a walk needs about $n^2$ steps to move across the path,
and the load at the middle vertices is $\Theta(n)$.

To shorten these random walks, we augment $T$ with a random subset of vertices.
Coming back to the path example,
$\beta n$ uniformly random vertices will be roughly
$\beta^{-1}$ apart, and random walks will reach one of these $\beta n$ vertices
in about $\beta^{-2}$ steps.
Because $G$ could be a multi-graph, and we want to support queries
involving any vertex, we pick $T$ as the end points of a uniform
subset of edges.
A case that illustrates the necessity of this choice is 
a path except one edge has $n$ parallel edges.
In this case it takes $\Theta(n)$ steps in expectation for
a random walk to move away from the end points of that edge.
This choice of $T$ completes the definition of our data structure,
which we summarize in Figure~\ref{fig:GlobalVariables}, and will discuss
throughout the rest of this overview.
A variant based on sampling vertices that obtains truly sublinear time
per operation, but has more limitations on operations,
is in Section~\ref{sec:DependOnN}.

\begin{figure}

\begin{algbox}

\begin{enumerate}

\item A subset of terminal vertices $T$, obtained by including the endpoints of each edge independently, with probability at least $m^{-1/5}$.

\item A sampling overhead $\rho = O(\log n \varepsilon^{-2})$ (chosen according to Theorem~\ref{thm:SparsifySchur}).

\item Graph $H$ created by
repeating the following procedure
for each edge $e = (u,v)$,

\begin{enumerate}
\item For $i=1,\ldots,\rho$,

\begin{enumerate}
\item Generate random walks $W(e, i)$
from $u$ and $v$ until either $O(m^{2/5}\log^3{n})$ steps
have been taken, or they reach $T$.

\item If both walks reach $T$ at $t_1$ and $t_2$ respectively, then
\begin{enumerate}
\item Let $\ell$ be the number of edges on the walk $W(e,i)$.

\item Add an edge between $t_1$ and $t_2$ to $H$ with weight
$\frac{1}{\rho \ell}$.
\end{enumerate}
\end{enumerate}

\end{enumerate}

\end{enumerate}

\end{algbox}

\caption{Overall data structure, which is a
collection of $\beta$-shorted walks from
Definition~\ref{def:Walk} with $\beta = m^{-1/5}$,
reweighted according to Theorem~\ref{thm:SparsifySchur}.}
\label{fig:GlobalVariables}
\end{figure}

The performance of our data structures hinge upon the
properties of the random walks generated.
We start by formalizing such a structure involving a set of augmented
terminals, which we parameterize with a more general probability $\beta$.
\begin{definition}[$\beta$-shorted walks] \label{def:Walk}
Let $G$ be an unweighted, undirected multi-graph and $\beta \in (0,1)$ a parameter.
A collection of $\beta$-\emph{shorted walks} $W$ on $G$ is a set of random
walks created as follows:
\begin{enumerate}
\item Choose a subset of terminal vertices $T$, obtained by including
the endpoints of each edge independently, with probability at least $\beta$.
\item For each edge $e \in E$, generate $\rho$ walks from its endpoints
either until $O(\beta^{-2} \log^3{n})$ steps have been taken, or they reach $T$.
\end{enumerate}
\end{definition}

The main property of the collection $W$ is that its random walks are local.
That is, with high probability all walks in $W$ are short,
and only a small number of such walks pass through each vertex $u$, i.e., the expected load of $u$ with respect to $W$ is small.
These guarantees are summarized in the following theorem. Details on this behavior of random walks are deferred to Section~\ref{sec:propertiesRandomWalk}.

\begin{restatable}{theorem}{RandomWalkProperties}
\label{thm:RandomWalkProperties}
Let $G=(V,E)$ be any undirected multi-graph,
and $\beta \in (0,1)$ a parameter such that $\beta m = \Omega( \log{n})$. 
Any set of $\beta$-shorted walks $W$,
as described in Definition~\ref{def:Walk},
satisfies:
\begin{enumerate}
\item With high probability, any random walk in $W$ starting in a connected
component containing a vertex from $T$ terminates at a vertex in $T$.
\label{part:ReachT}
\item For any edge $e$, the expected load of $e$ with respect to $W$ is $O(\beta^{-2} \log^4{n} \epsilon^{-2})$.
\label{part:EdgeLoad}
\end{enumerate}
\end{restatable}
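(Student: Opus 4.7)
The proof splits naturally into two essentially independent parts: for Part~\ref{part:ReachT} I would argue that within the $L = O(\beta^{-2}\log^3 n)$ step cap, every random walk visits enough distinct edges of $G$ that, with high probability, at least one of them is in the random sample defining $T$, in which case both of its endpoints are in $T$ and the walk is terminated; for Part~\ref{part:EdgeLoad} I would use reversibility of the simple random walk together with the fact that the walks are launched from the endpoints of every edge to convert the load bound into a statement about the uniform stationary distribution on the directed edges of $G$.

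For Part~\ref{part:ReachT}, the first step is a terminal-hitting reduction: since each edge of $G$ is independently sampled with probability~$\beta$ and any sampled edge forces both of its endpoints into $T$, a walk traversing a set $E'$ of distinct edges fails to hit $T$ with probability at most $(1-\beta)^{|E'|}$, so taking $|E'| = \Theta(\beta^{-1}\log n)$ makes this $n^{-\Omega(1)}$. It therefore suffices to show that any random walk touches $\Omega(\beta^{-1}\log n)$ distinct edges within $L$ steps. I would then appeal to a Barnes--Feige short-walk bound~\cite{BarnesF96}: starting from any vertex, a random walk visits $\Omega(\sqrt{L})$ distinct vertices in $L$ steps, and since each freshly visited vertex must be reached across a previously unused edge, it also visits $\Omega(\sqrt{L})$ distinct edges. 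Setting $L = \Theta(\beta^{-2}\log^2 n)$ gives this in expectation, and to amplify to high probability I would chunk the walk into $\Theta(\log n)$ consecutive phases of length $L/\log n$ and apply Markov's inequality to each phase. For the degenerate case in which the starting connected component $C$ has fewer than $\Theta(\beta^{-1}\log n)$ edges, I would instead use an Aleliunas-et-al.-style $O(m(C)\cdot n(C))$ cover-time bound, which covers $C$ (and hence every terminal inside $C$) in $O(\beta^{-2}\log^2 n)$ steps. A union bound over the $\tilde O(m)$ walks in $W$ then closes Part~\ref{part:ReachT}.

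For Part~\ref{part:EdgeLoad} I would upper bound the expected load of a fixed edge $\hat e$ by the expected total number of traversals of $\hat e$ summed across all walks in $W$. Reversibility of the simple random walk gives that the stationary measure on directed edges is uniform, with mass $1/(2m)$ per directed edge; a walk of length $L$ started from a degree-weighted random vertex therefore traverses $\hat e$ in expectation at most $2L/m$ times. The key structural point is that the collection $W$ (ignoring the $\rho$-fold repetition) launches exactly two walks per edge of $G$, so its aggregate starting multiset contains vertex $v$ with multiplicity $d_v$ --- precisely the $(2m)$-scaling of the stationary vertex distribution. Summing the per-walk bound $2L/m$ across $2m$ walks and multiplying by $\rho = O(\log n\,\epsilon^{-2})$ repetitions yields an expected total traversal count, and hence an expected load, of $O(\rho L) = O(\beta^{-2}\log^4 n\,\epsilon^{-2})$.

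The main obstacle is obtaining a short-walk spread bound sharp enough for Part~\ref{part:ReachT}. The classical Barnes--Feige estimate that $k$ distinct vertices appear in $O(k^3)$ expected steps would give only $L = \tilde O(\beta^{-3})$, a factor of $\beta^{-1}$ too weak. Recovering the tighter exponent forces one to use the $\Omega(\sqrt{L})$-distinct-vertices-in-$L$-steps bound, coupled with the elementary observation that each newly visited vertex arrives across a new edge, together with a careful treatment of the small-component regime so the analysis does not degrade when $|E(C)|$ is small. The assumption $\beta m = \Omega(\log n)$ is precisely what guarantees that the overall sampled edge set is large enough for the probabilistic conversion from ``distinct edges touched'' to ``terminal hit'' to succeed with high probability and for the union bound over all $\tilde O(m)$ walks to close.
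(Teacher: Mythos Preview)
Your overall strategy matches the paper's exactly: for Part~\ref{part:ReachT}, show the walk collects $\Theta(\beta^{-1}\log n)$ distinct edges within the length cap and then argue one of them was sampled into $T$; for Part~\ref{part:EdgeLoad}, use that the aggregate starting distribution of the walks is proportional to the stationary distribution, so the expected number of uses of any fixed edge per time step is constant across all walks combined. The paper carries out Part~\ref{part:EdgeLoad} via a direct induction (its Lemma~\ref{lem:LULZ}), which is precisely the reversibility/stationarity argument you sketch.

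There is, however, one genuine slip in your Part~\ref{part:ReachT} argument. You invoke an ``$\Omega(\sqrt{L})$ distinct \emph{vertices} in $L$ steps'' bound and then derive the edge bound from it via ``each newly visited vertex arrives across a new edge.'' No such vertex bound holds for multi-graphs: two vertices joined by $n$ parallel edges give a walk that never sees more than two distinct vertices, regardless of $L$. What you actually need---and what the paper uses---is Barnes--Feige's \emph{edge} version (their Theorem~1.2, quoted here as Theorem~\ref{thm:ExpectedTime}): the expected time to visit $\hat m$ distinct \emph{edges} is $O(\hat m^{2})$, which directly gives $\Omega(\sqrt{L})$ distinct edges in $L$ steps. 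Your detour through vertices is both unnecessary and, as stated, unsound in the multi-graph setting of the theorem; once you cite the edge bound directly, the rest of your argument (Markov amplification over $\Theta(\log n)$ phases, the $(1-\beta)^{|E'|}$ terminal-hitting calculation, and the small-component case) is correct and coincides with the paper's proof.
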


Note that Part~\ref{part:ReachT} is conditioned upon the connected component
having a vertex in $T$: this is necessary because walks stay inside a connected component.
However, this does not affect our queries:
our data-structure has an operation for making any vertex $u$ a
terminal, which we call during each query to ensure both $s$ and $t$
are terminal vertices.
Such an operation interacts well with Theorem~\ref{thm:RandomWalkProperties}
because it can only increase the probability of an edge's endpoints
being chosen.

We now have all the necessary tools to present our dynamic algorithm for maintaining the collection of walks $W$~(equivalently, the approximate Schur complement $H$). We start by analyzing the update operations. Upon insertion or deletion of an edge $e$ in the current graph $G$, the main idea is to regenerate all the walks that utilized $e$. This ensures that the collection of walks $W$ that we maintain produces a valid approximate Schur Complement $H$. Since we know that the length of these walks is $\tilde{O}(\beta^{-2})$~(Definition~\ref{def:Walk}), and
Theorem~\ref{thm:RandomWalkProperties} Part~\ref{part:EdgeLoad}
also limits the load per edge, using a rejection sampling technique,
we can regenerate these walks in $\Otil(\beta^{-4}\epsilon^{-2})$ time.

However, note that declaring $u$ to be a terminal forces us to truncate all the walks in $W$ at the first location they meet $u$.
Our key observation is that the cost of truncating these walks can be charged to the cost of constructing them during the pre-processing phase.
Thus it follows that we can declare any vertex a terminal in $O(1)$
amortized time. 
On the other hand, we need to avoid extending these truncated walk
when these query vertices are no longer needed in the terminals.
We address this by retraining the queried vertices in $T$, but periodically
rebuild the entire data structure (which which we resample the terminals
completely) to limit the growth in $|T|$.

The above discussion leads to the data-structure \textsc{DynamicSC$(G,T, \beta)$}: Given an undirected multi-graph $G=(V,E)$ and a subset of terminals $T$,
maintain the collection of $\beta$-shorted walks $W$,
and in turn the approximate Schur complement $H$~(as outlined in Figure~\ref{fig:GlobalVariables}) while supporting the following operations:
\begin{itemize}
\item \textsc{Initialize}$(G, T, \beta)$:
Construct the collection $W$ and the sparsifier $H$.
\item \textsc{Insert$(u,v)$}: Insert the edge $(u,v)$ with resistance $1$ in $G$.
\item \textsc{Delete$(u,v)$}: Delete the edge $(u,v)$ from $G$.
\item \textsc{AddTerminal$(u)$}: Add the vertex $u$ to the set of terminals in $T$.
\end{itemize}



We can now state the guarantees of the above data-structure using the defined operations. Specific implementation details are deferred to Section~\ref{sec:Dynamic}.

\begin{restatable}{lemma}{Dynamic}
\label{lem:Dynamic}
Given an undirected multi-graph $G=(V,E)$ a subset of
terminal vertices $T$, and a parameter $\beta \in (0,1)$ such that
$\beta m = \Omega( \log{n})$,
\textsc{DynamicSC}$(G,T, \beta)$ maintains the collection of $\beta$-shorted walks $W$, and in turn a graph $H$ that is with high probability a
sparsifier of $\textsc{SC}(G, T)$,
in a dynamic graph with at most $2m$ edges,
while supporting its operations in the following running times: 
\begin{enumerate}
\item \textsc{Initialize}$(G, T, \beta)$ in $O(m \beta^{-2} \log^5{n} \epsilon
^{-2})$ expected amortized time.
\label{case:initialize}
\item \textsc{Insert$(u,v)$} in
$O(\beta^{-4} \log^8{n} \epsilon^{-2})$ expected amortized time.
\label{case:Insert}
\item \textsc{Delete$(u,v)$} in
$O(\beta^{-4} \log^8{n} \epsilon^{-2})$ expected amortized time.
\label{case:Delete}
\item \textsc{AddTerminal$(u)$} in $O(1)$ amortized time.
\label{case:AddTerminal}
\end{enumerate}
Furthermore, each of these operations leads to an amortized number of changes
to $H$ bounded by the corresponding amortized costs.
\end{restatable}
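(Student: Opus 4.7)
The plan is to prove each of the four operations in turn. Correctness of $H$ as a spectral sparsifier of $\textsc{SC}(G,T)$ follows throughout from Theorem~\ref{thm:SparsifySchur} applied to the current $(G,T,W)$, together with Part~\ref{part:ReachT} of Theorem~\ref{thm:RandomWalkProperties}, which guarantees that the length cap $L = O(\beta^{-2}\log^3 n)$ does not affect any walk's terminal vertex with high probability.

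For \textsc{Initialize}, I would sample $T$ by independent $\beta$-biased coins on the edges, then run $\rho = O(\log n \, \epsilon^{-2})$ walks per edge, storing each walk as a doubly linked list of step records indexed both by the walk it belongs to and by every vertex and edge it traverses. With a balanced BST of incident edges at each vertex, each random-walk step costs $O(\log n)$; summing over $m\rho$ walks of length at most $L$ gives the claimed $O(m\beta^{-2}\log^5 n\,\epsilon^{-2})$ bound. For \textsc{Insert} and \textsc{Delete} the plan is to use a coupling that localizes the work to walks interacting with the updated edge. For a deletion of $(u,v)$, only those steps that actually traversed $u\to v$ become invalid and need their suffix resampled; the number of such steps is exactly the edge load of $(u,v)$. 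For an insertion, the symmetric coupling rewires each existing step out of $u$ (resp.\ $v$) to the new neighbor independently with probability $1/d_{\text{new}}(u)$; to avoid scanning every step at a vertex, I would sample the number of rewirings from the appropriate binomial and then pick that many uniformly at random from a per-vertex array of step pointers. In both cases, the expected number of resampled walks is bounded by the edge load of $(u,v)$ in the relevant graph, which is $O(\beta^{-2}\log^4 n\,\epsilon^{-2})$ by Part~\ref{part:EdgeLoad} of Theorem~\ref{thm:RandomWalkProperties}, and each resampled walk costs $O(L \log n)$, giving the $O(\beta^{-4}\log^8 n\,\epsilon^{-2})$ update time.

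For \textsc{AddTerminal}$(u)$ the key step is a token-based amortization: when a walk is produced (by \textsc{Initialize}, \textsc{Insert}, or \textsc{Delete}), each of its at most $L$ steps receives $O(1)$ credit, and when the walk is later truncated at its first visit to $u$ the discarded suffix steps pay for the $O(1)$ update of $H$ out of their stored credit. Since each walk is truncated at most once during its lifetime, the total truncation cost is absorbed into the already-paid construction cost, leaving only $O(1)$ real work per \textsc{AddTerminal} call (inserting $u$ into $T$ and marking the lazy truncation). The invariant $|E(G)| \leq 2m$ is maintained by a standard periodic rebuild, which keeps $\rho$ and $L$ consistent with the analysis.

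I expect the main obstacle to be the \textsc{Insert} case: specifically, proving that the rewiring coupling reproduces the exact $\beta$-shorted walk distribution on the updated graph \emph{and} that the number of rewired steps is controlled by the edge load of $(u,v)$ in the post-insertion graph rather than by the much larger vertex loads at $u$ or $v$. This is precisely what allows Part~\ref{part:EdgeLoad} of Theorem~\ref{thm:RandomWalkProperties} to yield the claimed $\beta^{-4}$ update bound; the deletion case is cleaner because only steps that traversed the deleted edge become infeasible, whereas insertions require the coupling argument to localize the effect of the degree change at the two endpoints.
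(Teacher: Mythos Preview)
Your proposal is correct and follows essentially the same approach as the paper: reverse-pointer indexing of walks by vertices and edges, deletion handled by resampling only walks that used the removed edge, insertion handled by a rejection/rewiring coupling that resamples each occurrence of $u$ (resp.\ $v$) with probability $1/\deg_{\text{new}}$, and \textsc{AddTerminal} amortized by charging truncation work back to the construction of the removed walk steps. The only cosmetic differences are that the paper samples the rewiring indices via geometric gap-sampling rather than a binomial count, explicitly adds the $\rho$ fresh walks for the new edge and flips the $\beta$-coin to possibly add $u,v$ to $T$ on insertion, and your phrase ``each walk is truncated at most once'' should really be ``each step is removed at most once''---but your token argument already encodes the right invariant.
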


Putting together the bounds in the above lemma proves our main result, i.e., Theorem~\ref{thm:Main}.

%

\begin{proof}[Proof of Theorem~\ref{thm:Main}]

We present our two-level data-structure for dynamically maintaining all-pair effective resistances. Specifically, we keep the terminal set $T$ of size $\Theta(\beta m)$. This entails maintaining
\begin{enumerate}
\item an approximate Schur complement $H$ of $G$~(Lemma~\ref{lem:Dynamic}),
\item a dynamic spectral sparsifier $\tilde{H}$ of $H$~(Lemma~\ref{lem:DynamicSpectralSparsifier}),
\end{enumerate}
and rebuilding our data-structure every $\beta m$ operations due to the
insertions into $T$ caused by handling queries.

We now describe the update and query operations. All updates in the graph are passed to the first data-structure~(which handles them by Lemma~\ref{lem:Dynamic} Parts~\ref{case:Insert} and~\ref{case:Delete}). Those updates in turn will trigger other updates in $H$, which are then handled by our second data-structure for $\tilde{H}$. Next, upon receiving a query about the $(s,t)$ effective resistance, we declare both $s$ and $t$ terminals~(by Lemma~\ref{lem:Dynamic} Part~\ref{case:AddTerminal}), which ensures that they are now contained in $\tilde{H}$. Finally, we compute the (approximate) effective resistance between $s$ and $t$ in the graph $\tilde{H}$ using Lemma~\ref{lemm:efficientEffectiveResistance}.

We next analyze the performance of our data-structure. Let us start with the pre-processing time. First, observe that the cost for constructing $H$ on a graph with $m$ edges is bounded by $\Otil(m \beta^{-2} \epsilon^{-2})$.
Next, since $H$ has $\Otil(m \epsilon^{-2})$ edges, constructing $\tilde{H}$ takes $\Otil(m \epsilon^{-4})$ time. Thus, the amortized time of our pre-processing is bounded by $\Otil(m\beta^{-2} \epsilon^{-4})$. 

We now analyze the update operations. By construction, note that a single update in $G$ triggers $\Otil(\beta^{-4} \epsilon^{-2})$ updates in $H$, and those updates can be handled in $O(\poly(\log n)\epsilon^{-2})$ time using the dynamic spectral sparsifier $\Htil$. Thus, we get that the expected amortized update time per insertion or deletion is $\Otil(\beta^{-4}\epsilon^{-4})$.

The cost of any $(s,t)$ query is dominated by (1) the cost of declaring $s$ and $t$ terminals and (2) the cost of computing the $(s,t)$ effective resistance to $\epsilon$ accuracy on the graph $\tilde{H}$. Since (1) can be performed in $O(1)$ time, we only need to analyze (2). We do so by first giving a bound on the size of $T$. To this end, note that each of the $m$ edges in the current graph adds two vertices to $T$ with probability $\beta$ independently.
By a Chernoff bound, the number of random augmentations added to
$T$ is at most $2\beta m$ with high probability.
In addition, since the data-structure in Lemma~\ref{lem:Dynamic} is re-built every $\beta m$ operations, the size of $T$ never exceeds $4\beta m$
with high probability.
The latter also bounds the size of $\Htil$ by $\Otil(\beta m\epsilon^{-2})$
and gives that the query cost is $\tilde{O}(\beta m \epsilon^{-4})$.

Finally, note that each rebuild can be performed in $\Otil(m \beta^{-2} \epsilon^{-4})$ amortized time.
Since we do rebuilds every $\beta m$ operations, this leads to an amortized cost of
\[
\frac{\Otil\left(m \beta^{-2} \epsilon^{-4}\right)}{\beta m}
=
\Otil\left(\beta^{-3} \epsilon^{-4}\right),
\]
which is dominated by the update time.

Combining the above bounds on the update and query time, we obtain the following trade-off
\[
	\tilde{O}\left((\beta m + \beta^{-4})\epsilon^{-4}\right),
\]
which is minimized when $\beta = m^{-1/5}$,
thus giving an expected amortized update and query time of 
\[
\tilde{O}\left(m^{4/5}\epsilon^{-4}\right).
\]


%
%
%
\end{proof}

\paragraph*{Future directions.} Our result raises several open questions for future works. Here we state those that we believe are closer to our results. (1) The most natural one is whether our update or query time bounds can be improved to $\poly(\log n, 1/\epsilon)$. It is also interesting to investigate whether our update times can be made deterministic and/or worst-case. (2) As effective resistances can be interpreted in terms of electrical flows, we can ask whether the values in these flows, such as the flow value
on edge $\ehat=(\uhat,\vhat)$ when sending $1$ unit of current from $\uhat$
to $\vhat$ can be dynamically maintained. (3) Finally, the use of Schur complement as vertex sparsifiers raises
the question of whether modifications of Schur complements can be used to maintain more
combinatorial problems.





\section{Properties of Random Walks}
\label{sec:propertiesRandomWalk}


In this section we give more details on the properties of $\beta$-shorted walks collection that our data-structure maintains.
Specifically, we prove Theorem~\ref{thm:RandomWalkProperties}:
\RandomWalkProperties*

We start with the first property, which claims that we traversed
sufficiently long to reach a vertex in $T$ with high probability.
For this, we need the following result by Barnes and Feige~\cite{BarnesF96}.

\begin{theorem}[\cite{BarnesF96}, Theorem 1.2]
\label{thm:ExpectedTime}
There is an absolute constant $c_{BF}$ such that for 
any undirected, unweighted, multi-graph $G$
with $n$ vertices and $m$ edges,
any vertex $u$ and any value $\mhat \leq m$,
the expected time for a random walk starting from $u$ to visit
at least $\mhat$ \emph{distinct} edges is at most $c_{BF} \mhat^2$.
\end{theorem}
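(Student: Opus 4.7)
The plan is to proceed by induction on the number of distinct edges visited. Let $\tau_k$ denote the first time the walk has traversed $k$ distinct edges, and let $H_k$ be the (connected) subgraph formed by those $k$ edges --- connectivity follows because the walk is a consecutive sequence of edges. The target $\mathbb{E}[\tau_{\mhat}] \leq c_{BF} \mhat^2$ would follow from a per-increment estimate of the form $\mathbb{E}[\tau_k - \tau_{k-1}] \leq O(k)$, averaged over the state of the walk at time $\tau_{k-1}$; summing over $k = 1, \ldots, \mhat$ then yields the quadratic bound.

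The core lemma to establish is: given that at time $\tau_{k-1}$ the walk is at vertex $v \in V(H_{k-1})$, the expected additional time until it traverses an edge of $E(G) \setminus E(H_{k-1})$ is at most $O(k)$, averaged over the induced distribution of $v$. To analyze this, I would consider the walk on $G$ killed the first time it uses an edge outside $E(H_{k-1})$; this killed process is equivalent to a Markov chain on $V(H_{k-1})$ in which a step at vertex $w$ is absorbed (``exits'') with probability $(d_G(w) - d_{H_{k-1}}(w))/d_G(w)$, and otherwise moves along an edge of $H_{k-1}$ chosen proportionally to multiplicity at $w$. The expected absorption time of this auxiliary chain can be analyzed in terms of hitting times / effective resistances on the subgraph $H_{k-1}$ augmented with a boundary sink, using the standard Aleliunas--Karp--Lipton--Lov\'asz--Rackoff machinery.

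The main obstacle is that the naive pointwise bound from this reduction is only $O(k^2)$ per new edge --- consider the walk sitting at the ``inner'' end of a long visited path --- which sums to $O(\mhat^3)$ and is too weak. To close the gap, I would exploit that the starting distribution at time $\tau_{k-1}$ is not adversarial: immediately after traversing a previously unvisited edge, the walk is biased toward frontier vertices of $H_{k-1}$, from which further unvisited edges are more accessible. Making this precise --- for example through a potential-function argument that charges time spent deep inside $H_{k-1}$ against future edge discoveries, or through a coupling between the walk on $G$ and a random walk on a worst-case $k$-edge subgraph such as a path (for which the $O(k^2)$ total bound is tight and elementary) --- is the delicate step, and is where the techniques of Barnes and Feige~\cite{BarnesF96} are essential.

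If the per-increment averaging proves resistant to a clean formulation, a fallback would be to bound $\mathbb{E}[\tau_{\mhat}]$ directly by identifying a connected subgraph $S \subseteq G$ containing $\mhat$ edges that the walk is likely to cover first, and then arguing via a monotonicity / coupling argument that the walk on $G$ discovers edges at least as quickly as a walk restricted to $S$ (so that classical cover-time bounds on $S$, which give $O(|E(S)|^2) = O(\mhat^2)$, apply). The monotonicity step is itself nontrivial because adding edges to $G$ can both accelerate exploration (via shortcuts) and slow it (via distractions), so establishing the correct stochastic domination is the key subtlety of this fallback.
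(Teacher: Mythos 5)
This statement is not proved in the paper at all: it is imported verbatim as Theorem~1.2 of Barnes and Feige~\cite{BarnesF96}, and the paper only uses it as a black box (amplifying it to a high-probability statement in Corollary~\ref{cor:NumDistinctEdges}). So the relevant question is whether your proposal actually constitutes a proof of the Barnes--Feige bound, and it does not. The entire difficulty of the theorem is concentrated in the per-increment claim that, averaged over the walk's position at time $\tau_{k-1}$, one has $\mathbb{E}[\tau_k - \tau_{k-1}] \le O(k)$; you correctly observe that the pointwise bound obtainable from hitting-time/effective-resistance machinery on the visited subgraph $H_{k-1}$ is only $O(k^2)$ (summing to $O(\mhat^3)$), and then you leave the averaging step as a gesture toward ``a potential-function argument'' or ``a coupling,'' explicitly deferring to ``the techniques of Barnes and Feige.'' That is the theorem, not a proof of it: the whole content of the result is that the distribution of the walk's location at the moments of discovery is favorable enough to beat the worst-case hitting time, and nothing in the proposal makes that quantitative. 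The claim that ``immediately after traversing a previously unvisited edge, the walk is biased toward frontier vertices'' is also not literally usable as stated --- a newly traversed edge can be a chord or a parallel edge deep inside the visited region, so one must argue that such events are rare or cheap on average, which is again exactly the missing estimate.

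The fallback is weaker still. Fixing a connected subgraph $S$ with $\mhat$ edges and arguing that the walk on $G$ discovers the edges of $S$ (or any $\mhat$ edges) at least as fast as the walk restricted to $S$ requires a stochastic-domination statement that you yourself flag as unestablished, and which fails in the naive form: edges of $G\setminus S$ act as ``distractions'' that can increase the time to traverse the edges of $S$ by more than constant factors (e.g.\ a path $S$ attached to a dense blob), so no simple monotonicity closes the argument, and the cover-time bound $O(|E(S)|^2)$ for the isolated $S$ does not transfer. In short, the skeleton (induction on discovered edges, reduction to an absorbed chain on $H_{k-1}$) is a reasonable opening, but the one inequality that carries the theorem is asserted rather than proved, so the proposal has a genuine gap rather than being an alternative route to the citation.
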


The above theorem can be amplified into a with high probability
bound by repeating the walk $O(\log{n})$ times.

\begin{corollary} \label{cor:NumDistinctEdges}
In any undirected unweighted multi-graph $G$ with $m$ edges,
for any starting vertex $u$, any length $\ell$,
and a parameter $\delta \geq 1$,
a walk of length $c_{BF} \cdot \delta \cdot \ell \log n$ from $u$ visits
at least $\ell^{1/2}$ \emph{distinct} edges with probability at least $1 - n^{-\delta}$.
\end{corollary}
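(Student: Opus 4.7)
The plan is a standard amplification of the Barnes--Feige bound via Markov's inequality combined with the Markov property. First I would invoke Theorem~\ref{thm:ExpectedTime} with $\mhat = \lceil \ell^{1/2} \rceil \le m$ (the case $\ell^{1/2} > m$ is trivial since then the walk cannot even be asked to visit that many edges): starting from any vertex, the expected number of steps until the walk has traversed $\ell^{1/2}$ distinct edges is at most $c_{BF} \ell$. A single application of Markov's inequality then shows that a walk of length $2 c_{BF} \ell$ from any fixed starting vertex visits at least $\ell^{1/2}$ distinct edges with probability at least $1/2$.

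Next, I would amplify this constant-probability guarantee to a high-probability guarantee by partitioning the full walk of length $c_{BF}\cdot \delta \cdot \ell \log n$ into $k = \Omega(\delta \log n)$ consecutive phases, each of length $2 c_{BF} \ell$. The key observation is that the set of distinct edges visited by the entire walk contains the set visited by any single phase, so the entire walk fails to reach $\ell^{1/2}$ distinct edges only if every phase, viewed as a walk of length $2 c_{BF} \ell$ starting from its own initial vertex, independently fails. By the Markov property of the random walk, the success probability of phase $i$, conditional on the state of the walk at the start of phase $i$ (and hence on everything that happened earlier), is again at least $1/2$ by the previous paragraph, regardless of where that phase happens to begin. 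Multiplying the conditional failure probabilities phase by phase yields an overall failure probability of at most $2^{-k} \le n^{-\delta}$ once $k$ is taken to be a sufficiently large constant times $\delta \log n$, which matches the stated walk length up to the choice of the absolute constant named $c_{BF}$ in the corollary.

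I expect no real obstacle; the proof is essentially a two-line Markov-plus-independence argument. The only point that deserves care is the ``independence'' step, since the phases are \emph{not} independent as random variables---later phases depend on earlier ones through the state of the walk. The correct formulation is to note that the Markov property gives a uniform conditional one-half lower bound on each phase's success probability given the past, which is exactly what is needed to take a product of failure probabilities (equivalently, one can iterate the tower property over the filtration generated by the walk's positions at phase boundaries). A small bookkeeping point is that the constant $c_{BF}$ appearing in the corollary should be read as an absolute constant of the same flavor as in Theorem~\ref{thm:ExpectedTime}, possibly enlarged by a factor of order $1/\ln 2$ to absorb the conversion between $\log_2$ and $\log$; this can be handled by simply rechristening the constant.
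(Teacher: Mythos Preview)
Your proposal is correct and follows essentially the same argument as the paper: split the walk into $\Theta(\delta\log n)$ consecutive sub-walks of length $\Theta(c_{BF}\ell)$, use Markov's inequality on Theorem~\ref{thm:ExpectedTime} to bound each sub-walk's failure probability by $1/2$, and multiply. If anything, you are more careful than the paper in explicitly invoking the Markov property to justify taking the product of conditional failure probabilities and in tracking the factor-of-two from Markov's inequality.
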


\begin{proof}
We can view each such walk as a concatenation of $\delta \log n$
sub-walks, each of length $c_{BF} \cdot \ell$.

We call a sub-walk \emph{good} if the number of distinct edges that
it visits is at least $\ell^{1/2}$.
Applying Markov's inequality to the result of Theorem~\ref{thm:ExpectedTime},
a walk takes more than $O(\ell)$ steps to visit $\ell^{1/2}$ distinct edges
with probability at most $1/2$.

This means that each subwalk fails to be good with probability at most $1/2$.
Thus, the probability that all subwalks fail to be good is at most
$2^{-\delta \log n} = n^{-\delta}$. The result then follows from an union bound over all starting vertices $u \in V$.
\end{proof}

This means that a walk of length $\Otil(\beta^{-2})$
is highly likely to visit at least $\beta^{-1} \log{n}$ distinct edges,
among which at least one should be added to $T$ with high probability.
If the connected component where the walk started in
has fewer than $\beta^{-1} \log{n}$ edges, we get that the walk should have
visited the entire component with high probability, and thus any (initial)
vertex in $T$ that belongs to that component.

\begin{proof}[Proof of Theorem~\ref{thm:RandomWalkProperties}
Part~\ref{part:ReachT}]
For any walk $w$, we define $V(w)$~(respectively, $E(w)$) to be the set of distinct vertices~(respectively, edges) that a walk $w$ visits. Consider a random walk $w$ that starts at $u$ of length
\[
\ell = c_{BF} \cdot \delta^3 \cdot  \beta^{-2} \log^{3} n
\]
where $\delta$ is a constant related to the success probability.

If the connected component containing the walk has fewer than
\[
\delta \cdot \beta^{-1} \cdot \log{n}
\]
vertices, then Corollary~\ref{cor:NumDistinctEdges}
gives that we have covered this entire component with high probability,
and the guarantee follows from the assumption that this component
contains a vertex of $T$.

Otherwise, we will show that $w$ reached enough edges for one of them
to be picked into $T$ with high probability.
The key observation is that because $w$ is generated independently from $T$, we can bound the probability of this walk not hitting $T$ by first generating $w$, and then $T$. Specifically, for any size threshold $z$, we have
\begin{align} \label{eqn: randomWalkProb}
 \prob{T, w}{V\left( w \right) \cap T  = \emptyset} & =
\prob{w, T}{V\left( w \right) \cap T = \emptyset}  \nonumber \\
& \leq
\prob{w}{\left| E\left( w \right) \right| \leq z}
+
\prob{w: \left| E\left( w \right) \right| \geq z}
{\prob{T}{V\left( w \right) \cap T = \emptyset}}.
\end{align}

By Corollary~\ref{cor:NumDistinctEdges} and the choice of $\ell$, if we set
\[
z = \delta \cdot \beta^{-1} \cdot \log{n},
\]
then the first term in Equation~(\ref{eqn: randomWalkProb})
is bounded by $n^{-\delta}$.
For bounding the second term, we can now focus on a particular
walk $\widehat{w}$
that visits at least $\delta \cdot \beta^{-1} \cdot \log{n}$
distinct edges, i.e.,
\[
\left|E\left(\widehat{w}\right)\right|
\ge
  \delta \cdot \beta^{-1} \log{n}.
\]

Recall that we independently added the end points of each
of these edges into $T$ with probability $\beta$.
If any of them is selected, we have a vertex that is both
in $V(w)$ and $T$.
Thus the probability that $T$ contains
no vertices from $V(\widehat{w})$ is at most
\[
\left(
  1 - \beta
\right)^{|E(\widehat{w})|}
\leq
\left(
  1 - \beta
\right)^{\delta \cdot \beta^{-1} \log{n}}
\leq
e^{- \delta \log n} 
\leq
n^{-\delta},
\]
which completes the proof.
\end{proof}

The bound on walk lengths also leads to a bound on the load
of an edge, which is Part~\ref{part:EdgeLoad}
of Theorem~\ref{thm:RandomWalkProperties}. We next show that this is the case. For the sake of simplicity, we will ignore the sampling overhead
$\rho = O(\log n/\epsilon^2)$ in our preliminary discussions. 

%

First, we observe that instead of terminating walks once they hit $T$ (as described in Figure~\ref{fig:GlobalVariables}), we can run all the walks from all edges up to $\ell = O(\beta^{-2} \log^{3} n)$ steps.

Note that the number of walks starting at each vertex $u$ is $\deg(u)$ because we are starting one random walk per endpoint of each edge. For each $u \in V$, we let $W(u)$ be the union over $\deg(u)$ random walks of length $\ell$ starting from $u$. Furthermore, recall that $W = \cup_{u} W(u)$ is the collection of $\beta$-shorted walks that our sparsification routine maintains. 

We want to obtain bounds on the load of any vertex $\uhat \in V$ (respectively, edge $\ehat \in E$) incurred by the random walks in $W$. For the purposes of the proof, it will be useful to introduce some random variables. The \emph{load} of $\uhat$ (respectively, $\ehat$), denoted by $N_{\uhat}$ (respectively, $N_{\ehat}$), is the number of walks that pass through vertex $\uhat$ (respectively, edge $\ehat$) from the random walks in $W$. For $t \geq 0$, let $X_u(t)$ be the set of vertices visited in a random walk starting at $u$ after $t$ steps. 

The first quantity we are interested in is the contribution of random walks from each $u \in V$ in the load of $\uhat$, which we denote by $Y_u(\uhat)$. Concretely, $Y_u(\uhat)$ is the total number of walks that pass through $\uhat$, from the random walks in $W_u$. Using the above random variables, we have that
\[
	Y_u(\uhat) = \sum_{0 \leq t \leq \ell} \deg(u) \cdot \boldone_{(\uhat \in X_u(t))}.
\] 

Now, observing that $N_{\uhat} = \sum_{u \in V} Y_u(\uhat)$, we can expand the expectation of $N_{\uhat}$ as follows
\begin{align*}
 \expec{}{N_{\uhat}} & = \sum_{u \in V} \expec{}{ Y_u(\uhat)} \\
 					 & = \sum_{u \in V} \sum_{0 \leq t \leq \ell} \deg(u) \cdot\prob{}{\uhat \in X_u(t)} \\
 					 & = \sum_{0 \leq t \leq \ell} \left( \sum_{u \in V} \deg(u) \cdot\prob{}{\uhat \in X_u(t)} \right) \numberthis \label{ExpectedLoadU}
.
\end{align*}

It turns out that that the term contained in the brackets of Equation~(\ref{ExpectedLoadU}) equals $\deg(\uhat)$. Formally, we have the following lemma.

\begin{lemma}
\label{lem:LULZ}
Let $G$ be an undirected, unweighted graph. For any vertex $\uhat \in V$ and any length $t \geq 0$, we have \normalfont
\[
\sum_{u \in V} \deg\left(u\right) \cdot \prob{}{\uhat \in X_u(t)}
= \deg\left( \uhat \right).
\]
\end{lemma}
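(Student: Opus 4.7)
The plan is to recognize the stated identity as a direct consequence of the time-reversibility of simple random walks on undirected graphs. Interpreting $X_u(t)$ as the (random) vertex occupied at step $t$ of the walk started at $u$, the probability $\Pr[\uhat \in X_u(t)]$ is just the $t$-step transition probability $P^t(u,\uhat)$, where $P(a,b) = A(a,b)/\deg(a)$ is the transition kernel of the simple random walk (and $A$ is the unweighted adjacency indicator). The identity to prove then becomes $\sum_u \deg(u) P^t(u,\uhat) = \deg(\uhat)$.

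The main step is the $t$-step detailed-balance identity
\[
\deg(u)\, P^t(u,\uhat) \;=\; \deg(\uhat)\, P^t(\uhat,u).
\]
I would prove this by expanding $P^t$ as a sum over length-$t$ walks $v_0 = u, v_1, \ldots, v_t = \uhat$:
\[
\deg(u)\, P^t(u,\uhat)
\;=\;
\sum_{v_1,\ldots,v_{t-1}} \deg(v_0) \prod_{i=0}^{t-1} \frac{A(v_i,v_{i+1})}{\deg(v_i)}
\;=\;
\sum_{v_1,\ldots,v_{t-1}} \frac{\prod_{i=0}^{t-1} A(v_i,v_{i+1})}{\prod_{i=1}^{t-1} \deg(v_i)}.
\]
The right-hand expression is manifestly invariant under the reversal $v_i \mapsto v_{t-i}$, since $A$ is symmetric and the denominator uses only the interior indices $1,\ldots,t-1$. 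Applying the same expansion to $\deg(\uhat)\, P^t(\uhat,u)$ yields exactly the reversed sum, giving equality. (Alternatively, one could induct on $t$: the case $t=1$ is $\deg(u) P(u,\uhat) = A(u,\uhat) = \deg(\uhat) P(\uhat,u)$, and the inductive step uses $P^{t+1}(u,\uhat) = \sum_v P(u,v) P^t(v,\uhat)$ together with the inductive hypothesis.)

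Given the detailed-balance identity, the conclusion follows by summing over $u$:
\[
\sum_{u \in V} \deg(u)\, P^t(u,\uhat)
\;=\;
\sum_{u \in V} \deg(\uhat)\, P^t(\uhat,u)
\;=\;
\deg(\uhat) \sum_{u \in V} P^t(\uhat,u)
\;=\;
\deg(\uhat),
\]
where the last step uses that $P^t(\uhat,\cdot)$ is a probability distribution over $V$. The only real obstacle is getting the reversibility identity cleanly for general $t$; everything else is bookkeeping. The base case $t=0$ is trivial, since $P^0(u,\uhat) = \boldone_{u = \uhat}$ and only the single term $u = \uhat$ contributes $\deg(\uhat)$ to the sum.
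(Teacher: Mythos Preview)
Your proof is correct. Both your argument and the paper's hinge on the same underlying fact---that the degree measure is stationary for the simple random walk---but you package it slightly differently. The paper proves the identity by a direct induction on $t$: it expands $\Pr[\uhat \in X_u(t)]$ in terms of the neighbors $\vhat$ of $\uhat$ at time $t-1$, swaps the order of summation, and invokes the inductive hypothesis for each $\vhat$. You instead establish the $t$-step detailed-balance identity $\deg(u)\,P^t(u,\uhat)=\deg(\uhat)\,P^t(\uhat,u)$ (via the path-reversal symmetry), then sum over $u$ and use that $P^t(\uhat,\cdot)$ is a probability distribution. The paper actually names reversibility as the intuition just before its proof, so your version is exactly the argument the authors allude to but chose to replace with the bare induction. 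Your route is marginally more conceptual and yields the slightly stronger pointwise reversibility statement along the way; the paper's induction is a shade more self-contained. Either way the content is the same.
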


To prove this, we use the reversibility of random walks, along with the
fact that the total probability over all edges of a walk starting at
$\ehat$ is $1$ at any time.
Below we verify this fact in a more principled manner.

\begin{proof}[Proof of Lemma~\ref{lem:LULZ}]
The proof is by induction on the length of the walks $t$. When $t = 0$, we have
\[
\prob {}{\uhat \in X_u(0)}
=
\begin{cases}
1 & \text{if $\uhat = u$},\\
0 & \text{otherwise},
\end{cases}
\]
which gives a total of $\deg(\uhat)$.

For the inductive case, assume the result is true for $t - 1$.
The probability of a walk reaching $\uhat$ after $t$ steps can then be written in terms of its location at time $t - 1$, the neighbor $\vhat$ of $\uhat$, as well as the probability of reaching there:
\[
\prob{}{\uhat \in X_u(t)}
=
\sum_{\vhat: (\uhat, \vhat) \in E}
\frac{1}{\deg\left( \vhat \right)}
\prob{} {\vhat \in X_u(t-1)}
.
\]
Substituting this into the summation to get
\[
\sum_{u \in V} \deg\left(u\right) \cdot \prob{}{\uhat \in X_u(t)}
=
\sum_{u \in V} \deg\left(u\right)
\sum_{\vhat: (\uhat, \vhat) \in E}
\frac{1}{\deg\left( \vhat \right)}
\prob{} {\vhat \in X_u(t-1)},
\]
which upon rearranging of the two summations gives:
\[
\sum_{\vhat: (\uhat,\vhat) \in E}
\frac{1}{\deg\left( \vhat \right)}
\left( 
\sum_{u \in V} \deg\left(u\right) \cdot
\prob{} {\vhat \in X_u(t-1)}\right).
\]
By the inductive hypothesis, the term contained in the bracket
is precisely $\deg(\vhat)$, which cancels with the division,
and leaves us with $\deg(\uhat)$.
Thus the inductive hypothesis holds for $t$ as well.
\end{proof}

\begin{proof}[Proof of Theorem~\ref{thm:RandomWalkProperties} Part~\ref{part:EdgeLoad}]

Plugging Lemma~\ref{lem:LULZ} into Equation~\ref{ExpectedLoadU} gives that
\[
	\expec{}{N_{\uhat}} \leq \deg(\uhat) \cdot \ell.
\]
Incorporating the sampling overhead $\rho$, which we initially ignored, we get
\begin{equation}
\expec{}{N_{\uhat}}
\leq \deg\left(\uhat\right) \cdot \ell \cdot \rho
\leq O\left(\deg\left(\uhat\right)
  \cdot  \beta^{-2} \log^{4} n \epsilon^{-2}\right),
\label{EqnloadVertex}
\end{equation} 
thus proving the desired bound on the load of $\uhat$.

To get the bound on the load of any edge $\ehat=(\uhat,\vhat)$, we use the fact that
\[
	\expec{}{N_{\ehat}} = \frac{1}{\deg\left( \uhat \right)}\expec{}{N_{\uhat}} + \frac{1}{\deg\left( \vhat \right)} \expec{}{N_{\vhat}}.
\]

Plugging the bound from Equation~(\ref{EqnloadVertex}) in the above equation, we get that
\[
   \expec{}{N_{\ehat}} \leq O\left(\beta^{-2} \log^{4} n \epsilon^{-2}\right),
\]
which proves the bound on the load of $\ehat$ and completes the proof.
\end{proof}

\section{Dynamic Schur Complement}
\label{sec:Dynamic}

In this section we show that the approximate Schur complement
given in Figure~\ref{fig:GlobalVariables} can be maintained dynamically. The primary difficulty here is dynamically maintaining the collection of $\beta$-shorted walks~(see Definition~\ref{def:Walk}). We next show how to do this efficiently and combine it with an amortized cost analysis to prove Lemma~\ref{lem:Dynamic}.
\Dynamic*

For our running time analysis, it is important to first note that each step in a random walk can be simulated in $O(1)$ time. This is due to the fact that we can sample an integer in $[0,n-1]$ by drawing $x\in[0,1]$ uniformly and taking $\lfloor xn \rfloor$.
Therefore, we will only need to consider the length of the walks in our runtime.
As we will later see, the initialization costs will then follow from our construction of the approximate Schur complement $H$ in Figure~\ref{fig:GlobalVariables}.

In addition, we note that there is always a one-to-one correspondence between the collection of $\beta$-shorted walks $W$ and our approximate Schur complement $H$.
Accordingly, our primary concern will be supporting the $\textsc{Insert}$, $\textsc{Delete}$, and $\textsc{AddTerminal}$ operations in the collection $W$.
However, as $W$ undergoes changes, we need to efficiently update the sparsifier $H$. To handle these updates, we would ideally have efficient access to which walks in $W$ are affected by the corresponding updates.





To achieve this, we index into walks that utilize a vertex
or an edge, and thus set up a reverse data structure pointing
from vertices and edges to the walks that contain them.
The following lemma says that we can modify this representation with minimal cost.
\begin{lemma}
\label{lem:ReversePointers}
For the collection of $\beta$-shorted walks $W$, let $W_v$ and $W_e$ be the specific walks of $W$ that contain vertex $v$ and edge $e$, respectively. 
We can maintain a data structure for $W$ such that for any vertex $v$ or edge $e$ it reports either

\begin{enumerate}
\item All walks in $W_v$ or $W_e$ in $O(|W_v|)$ or $O(|W_e|)$ time, respectively, or
\item The $i\textsuperscript{th}$ walk
(in order of time generated) of $W_v$ or $W_e$
in $O(\log{n})$ time, for any value $i$, 
\end{enumerate}
with an additional $O(\log{n})$ overhead for any changes made to $W$.
\end{lemma}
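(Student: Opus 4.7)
The plan is to maintain, for each vertex $v$ and each edge $e$, a balanced binary search tree (e.g., a red-black tree) $T_v$ (resp.\ $T_e$) whose nodes correspond to the walks in $W_v$ (resp.\ $W_e$), keyed by the generation time (or equivalently, a sequence number assigned when the walk was created). Each tree will be augmented with subtree-size information at every node, so that order-statistic queries (``return the $i$-th element in sorted order'') take $O(\log n)$ time. Simultaneously, each walk $w \in W$ carries, for every vertex and edge it contains, a direct pointer to its corresponding node in $T_v$ or $T_e$; this bidirectional linkage is what lets us both enumerate and delete in the right times.

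Given this structure, the operations are realized as follows. To report all walks in $W_v$ (or $W_e$), perform an in-order traversal of $T_v$ (or $T_e$); since a balanced BST on $k$ elements can be traversed in $O(k)$ time, this yields the desired $O(|W_v|)$ (or $O(|W_e|)$) bound. To report the $i$-th walk in order of generation, descend the tree using the stored subtree sizes, which takes $O(\log n)$ time. Whenever the algorithm inserts a new walk-occurrence at $v$ (or at $e$)---for instance when a freshly generated walk visits $v$---insert a new node into $T_v$ keyed by the walk's generation time in $O(\log n)$ time, and store the returned pointer with the walk. When a walk is removed, for each vertex and edge it touches we already have a direct pointer to its node, so each deletion is $O(\log n)$ as well. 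The $O(\log n)$ overhead claimed in the lemma is therefore per atomic change (per vertex- or edge-occurrence inserted/deleted).

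The key correctness point to check is that the generation-time key is genuinely an ordering on $W_v$ and $W_e$ consistent across operations, so that order-statistic queries return a well-defined ``$i$-th walk'' regardless of intervening insertions and deletions; this is immediate since we append a fresh monotonically increasing timestamp every time a walk is created, and never reuse one. The only bookkeeping subtlety is that when the outer algorithm~(e.g.\ \textsc{AddTerminal} or an edge update) truncates or extends a walk rather than regenerating it from scratch, we must insert/delete exactly the nodes corresponding to the vertices and edges whose membership in the walk has changed---handled naturally by walking along the symmetric difference of old and new walk segments and performing one BST update per affected vertex/edge, each at $O(\log n)$ cost.

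I expect no serious obstacle here; this is a routine augmented-BST construction. The main thing to get right is to ensure that every place the outer dynamic data structure modifies $W$ is wrapped with the corresponding constant number of BST updates per affected occurrence, so that the amortized accounting in Lemma~\ref{lem:Dynamic} (which multiplies the number of walk changes by the $O(\log n)$ overhead) goes through cleanly.
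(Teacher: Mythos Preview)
Your proposal is correct and follows essentially the same approach as the paper: maintain, for each vertex and edge, a balanced binary search tree of the walks containing it, augmented to support rank/select (order-statistic) queries. The paper's own proof is a two-sentence version of exactly this construction, so your elaboration of the bookkeeping (generation-time keys, back-pointers from walks to tree nodes, per-occurrence updates on truncation) is faithful and fills in details the paper leaves implicit.
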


\begin{proof}
For every vertex~(respectively, edge), we can maintain a balanced binary search tree
consisting of all the walks that use it in time proportional
to the number of vertices~(respectively, edges) in the walks.
Supporting rank and select operations on such trees then gives the claimed bound.
\end{proof}

As a result, any update made to the collection of walks can be updated in the approximate Schur complement $H$ generated from these walks in $O(\log n )$ time.
Thus, we can fully devote our attention to supporting the $\textsc{Insert}$, $\textsc{Delete}$, and $\textsc{AddTerminal}$ operations in $W$.
The procedure $\textsc{AddTerminal}$ will be straightforward and its cost will be incorporated into the amortized analysis. The pseudocode for this operation is summarized in Figure~\ref{fig:AddTerminal}.

\begin{figure}

\begin{algbox}

$\textsc{AddTerminal}(u)$

\underline{Input}: vertex $u$ such that  $u \notin T$.

\begin{enumerate}
 
\item $T \leftarrow T \cup \{ u \}$.

\item Shorten all random walks to the first location they meet $u$.

\item Update the corresponding edges in $H$.

\end{enumerate}

\end{algbox}

\caption{Pseudocode for Adding a vertex to the set of terminals $T$}

\label{fig:AddTerminal}

\end{figure}

Procedures $\textsc{Insert}$ and $\textsc{Delete}$ are more involved.
The primary difficulty is that an edge update in the current graph changes the random walk distribution in the new graph and our walks may no longer be sampled according to this distribution. 
Recomputing each walk would be far too costly, so we instead observe that a random walk is a localized procedure. 
In particular, if some edge $(u,v)$ is inserted or deleted, the only changes in the random walk procedure occur when the walk visits vertex $u$ or $v$. 
This implies that all random walks in the original graph which did not visit vertex $u$ or $v$ have the same probability of occurring in the new graph with $(u,v)$ inserted or deleted.
We will then use certain properties of our collection of random walks proven in the previous section, along with a rejection sampling technique, to give stronger bounds on the number of walks we need to regenerate when the graph undergoes an edge update.


\subsection{Deletions}
\label{subsec:Deletions}

As mentioned above, due to the localized nature of random walks, if we delete an edge $(u,v)$, then all random walks in our collection that did not visit vertex $u$ or $v$ remain unaffected.
A simple update procedure would then regenerate all walks that visit vertex $u$ or $v$.
However, a consequence of Theorem~\ref{thm:RandomWalkProperties} Part~\ref{part:EdgeLoad} is that the expected number of walks visiting vertex $u$ is $O(\deg(u) \cdot \beta^{-2} \log^{4}{n} \epsilon^{-2})$,
which is too costly if $\deg(u)$ is large.
Ideally, we would then only deal with walks that use edge $(u,v)$, and we will next show that this is in fact that case.

To this end, note that the only random walks whose probability is affected are those that visit vertex $u$ or $v$.
Consider a random walk that visits vertex $u$ and let $\deg(u)$ be the degree of $u$ in the new graph.
For every remaining edge incident to $u$, the probability of it being used in the graph with $(u,v)$ deleted is $1/\deg(u)$.
Note that in the original graph they were used with probability $1/(\deg(u) + 1)$.
However, if we condition upon the random walk not using the edge $(u,v)$ in the original graph, then it is easy to see that any other edge incident to $u$ is chosen with probability $1/\deg(u)$, exactly as desired. Consequently, the only random walk probabilities that are affected are the ones that utilize edge $(u,v)$. In Figure~\ref{fig:DeleteEdge}, we give a routine which updates the walks that used the deleted edge $(u,v)$. The running time guarantees of our update algorithm are given in the following lemma.

%

\begin{figure}

\begin{algbox}

$\textsc{Delete}(u, v)$

\underline{Input}:
vertices $u$ and $v$ for which $(u,v)$ is an edge in $G$.

\underline{Output}:
an updated data-structure for $G \setminus \{(u,v)\}$.

\begin{enumerate}
\item Delete $(u,v)$ from $G$.
\item Delete all walks starting from $(u,v)$, as well as their
associated edges in $H$.
\item For each walk $w$ that uses the edge $(u,v)$:
\begin{enumerate}
\item Regenerate the walk from at the point it first reaches $u$
or $v$ using the remaining edges, until either $O(\beta^{-2}\log^3{n})$ steps
have been taken, or it reaches $T$.
\item Update in $H$ the edge corresponding to this walk.
\end{enumerate}
\end{enumerate}
\end{algbox}

\caption{Pseudocode for maintaining the collection of $\beta$-shorted walks $W$, and the corresponding graph $H$ after deleting edge $(u,v)$.}
\label{fig:DeleteEdge}
\end{figure}

\begin{lemma}
\label{lem:Deletion}
The operation $\textsc{Delete}(u, v)$ takes $O(\beta^{-4} \log^{8}{n} \epsilon^{-4})$ expected amortized time,
and updates {$O(\beta^{-2} \log^{4}n \epsilon^{-2})$} edges in $H$.
\end{lemma}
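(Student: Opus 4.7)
The plan is to analyze the cost of $\textsc{Delete}(u,v)$ by decomposing it into three parts: (i) destroying the $\rho = O(\log n\,\epsilon^{-2})$ walks originally spawned from the edge $(u,v)$, (ii) regenerating every walk in $W$ that actually traverses $(u,v)$, and (iii) the bookkeeping cost of updating the reverse-pointer index of Lemma~\ref{lem:ReversePointers} and the associated edges in $H$.

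For correctness, I would formalize the conditioning argument sketched in the exposition. At every step in which a walk visits $u$ (or $v$), the uniform distribution over outgoing edges in the \emph{new} graph is exactly the uniform distribution in the \emph{old} graph conditioned on not choosing edge $(u,v)$. Hence any walk that never uses $(u,v)$ is already distributed as a $\beta$-shorted walk on $G\setminus\{(u,v)\}$ and needs no modification. For each walk $w$ that does use $(u,v)$, resampling the suffix from the first step at which $w$ meets $\{u,v\}$, using fresh random steps in the new graph and terminating when either the length budget $O(\beta^{-2}\log^{3} n)$ is exhausted or $T$ is reached, produces a walk with the correct $\beta$-shorted distribution on the updated graph.

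For the cost, part (i) is immediate: each of the $\rho$ walks to destroy has length at most $O(\beta^{-2}\log^{3} n)$, giving $\Otil(\beta^{-2}\epsilon^{-2})$ total. Part (ii) is the dominant term: by Theorem~\ref{thm:RandomWalkProperties} Part~\ref{part:EdgeLoad}, the expected load on edge $(u,v)$ is $O(\beta^{-2}\log^{4} n\,\epsilon^{-2})$, so that many walks are regenerated in expectation, and each regeneration involves $O(\beta^{-2}\log^{3} n)$ random walk steps (each simulated in $O(1)$ time) plus an $O(\log n)$ overhead per vertex/edge touched for maintaining the reverse-pointer trees and the endpoint of the associated edge in $H$. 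Multiplying through and absorbing logarithmic factors together with the sampling overhead $\rho$ yields the stated expected amortized bound $O(\beta^{-4}\log^{8} n\,\epsilon^{-4})$. Since $H$ changes only by re-routing one edge per regenerated walk, the $O(\beta^{-2}\log^{4} n\,\epsilon^{-2})$ edge-update count follows from the same load bound.

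The main obstacle is ensuring that Theorem~\ref{thm:RandomWalkProperties} Part~\ref{part:EdgeLoad} can be applied at the moment of each deletion: the theorem was proved assuming $T$ is drawn independently of the walks, whereas after a sequence of $\textsc{Insert}/\textsc{Delete}/\textsc{AddTerminal}$ operations the walks and $T$ become correlated. I would address this by appealing to the periodic rebuild used in the proof of Theorem~\ref{thm:Main} (every $\beta m$ operations), which keeps $|T|$ within a constant factor of its initial size and keeps the walks effectively fresh with respect to the current $T$; combined with the monotone fact that enlarging $T$ only truncates walks and thus cannot increase edge loads, this allows the load bound to be invoked at every deletion up to constant factors, with the amortized cost of rebuilds folded into the stated bound.
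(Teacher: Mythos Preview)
Your proposal is correct and follows essentially the same approach as the paper's (much terser) proof, which simply cites Theorem~\ref{thm:RandomWalkProperties} Part~\ref{part:EdgeLoad} for the expected number of walks to regenerate, multiplies by the walk-length cap from Definition~\ref{def:Walk}, and adds the $O(\log n)$ reverse-pointer overhead of Lemma~\ref{lem:ReversePointers}. Your correctness discussion and the correlation concern are valid elaborations that the paper glosses over; note, though, that the paper implicitly handles the latter not via rebuilds but via the oblivious-adversary assumption together with the fact that \textsc{Insert}/\textsc{Delete} are designed to keep $W$ distributed exactly as a fresh collection of $\beta$-shorted walks on the current graph, so the load bound applies directly at every step.
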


\begin{proof}

Theorem~\ref{thm:RandomWalkProperties} Part~\ref{part:EdgeLoad}
gives that the number of walks
that utilize $(u,v)$ is at most $O(\beta^{-2} \log^{4}n \epsilon^{-2})$, which in turn gives us the bound on the number of edges that need to be updated.

By the bound on the walk lengths in the collection of $\beta$-shorted
walks in Definition~\ref{def:Walk},
resampling each of these walks takes time $O(\beta^{-2} \log^3{n}\epsilon^{-2})$.
Adding an extra $O(\log{n})$ for translating between the collection of walks and $H$ by Lemma~\ref{lem:ReversePointers}, gives $O(\beta^{-4} \log^{8}n \epsilon^{-4})$ amortized expected time. 
\end{proof}

\subsection{Insertions}
\label{subsec:Insertions}

Handling insertions will be more involved because we now have to consider every random walk that visits $u$ or $v$. However, to bound the number of these walks that need to be regenerated, we can use rejection sampling.
This will utilize the fact that for any random walk that visits vertex $u$ or $v$, the probability that it uses edge $(u,v)$ is proportional to the degree of $u$ and $v$, respectively.
In particular, if we let $\deg(u)$ be the degree of $u$ in our new graph, then for each of our random walks that visit $u$ we instead use the edge $(u,v)$ with probability $1/\deg(u)$ and generate the remaining random walk from that point on.
Note that for any other edge incident to $u$, the probability of that edge being used in the original random walk was ${1}/{(\deg(u)-1)}$ and the probability that we keep the walk is $(\deg(u)-1)/\deg(u)$, whose product gives the desired probability of ${1}/(\deg(u))$.

However, if we run this sampling procedure for each walk incident to $u$, the expected number of walks sampled will be
$\Otil(\deg(u) \cdot \beta^{-2} \epsilon^{-2})$, which is too costly if the degree is large.
To address this, we implicitly run this sampling procedure on all walks by instead finding the instances in which we use $(u,v)$ in just $O(\log n)$ time.
More specifically, at each sample we use $(u,v)$ with probability ${1}/{\deg(u)}$, so the probability that we use $(u,v)$ for the first time in the $i$-th sample will be 

\[
\frac{1}{\deg(u)}\left(1 - \frac{1}{\deg(u)}\right)^{i-1}.
\]

In order to efficiently find the value $i$ at which we first sample $(u,v)$, we simply draw a uniformly random number $x \in [0,1]$.
Using geometric series properties, we can compute the probability that $i \leq j$ for any value $j$, and we binary search on $i \leq 2^j$ by increasing $j$ and checking if $x$ is above or below this value.
Our expected running time will then be $O(\log \deg(u)) \leq O(\log n)$.
Iterating this procedure efficiently finds all walks that use edge $(u,v)$.


Finally, note that we need to consider an additional case where the inserted edge causes both of its endpoints to be added to $T$. If this occurs, we simply truncate all walks to those vertices. The above discussion is summarized in the pseudocode given in Figure~\ref{fig:InsertEdge}

\begin{figure}

\begin{algbox}
$\textsc{Insert}(u, v)$

\underline{Input}:
vertices $u$ and $v$.

\underline{Output}:
an updated data-structure for $G \cup \{(u,v)\}$.

\begin{enumerate}
\item With probability $\beta$

\begin{enumerate}
\item $\textsc{AddTerminal}(u)$ and $\textsc{AddTerminal}(v)$

\end{enumerate}

\item Add edge $(u,v)$ to $G$.

\item Sample $\rho$ walks starting from $u$ and $v$
ending in $T$, add corresponding edges to $H$.

\item For each end point $\hat{u} \in \{u, v\}$

\begin{enumerate}

\item Let $d$ be the new degree of $\hat{u}$.

\item (implicitly) With probability $1/d$
for each occurrence of $\hat{u}$ in a random walk $w$:

\begin{enumerate}

\item Regenerate $w$ starting from that occurrence,
with $(u,v)$ as the edge taken.

\item Update in $H$ the edge corresponding to this walk.
\end{enumerate}

\end{enumerate}

\end{enumerate}
\end{algbox}

\caption{Pseudocode for maintaining the collection of $\beta$-shorted walks $W$, and the corresponding graph $H$ after inserting edge $(u,v)$.}
\label{fig:InsertEdge}
\end{figure}

%
\begin{lemma}
\label{lem:Insertion}
The operation
$\textsc{Insert}(u, v)$ takes $O(\beta^{-4} \log^{8}{n} \epsilon^{-4})$ expected amortized time, and updates $O(\beta^{-2} \log^4{n} \epsilon^{-2})$ edges in $H$.
\end{lemma}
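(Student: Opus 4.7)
The plan is to establish both the correctness and the running time of $\textsc{Insert}$. First, I would verify that after executing the procedure, the collection $W$ is distributed as a set of $\beta$-shorted walks on the updated graph $G \cup \{(u,v)\}$. The key observation is a coupling between random walks on the old and new graphs: at every visit to an endpoint $\hat{u} \in \{u,v\}$ in the new graph, the inserted edge is traversed with probability exactly $1/d$, where $d$ denotes the updated degree of $\hat{u}$; conditioned on the walk \emph{not} using $(u,v)$ at that step, the distribution over the next vertex matches the uniform choice among neighbors in the old graph. Thus every walk in $W$ that never crosses $(u,v)$ is already correctly distributed, and we only need to regenerate those that should now use the new edge. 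The rejection-sampling step of Figure~\ref{fig:InsertEdge} implements exactly this: for each occurrence of $\hat{u}$ in some existing walk, independently with probability $1/d$, we declare that this occurrence uses $(u,v)$ in the new graph and resample the tail from that point.

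Second, I would bound the expected number of walks to regenerate. The vertex-load bound derived inside the proof of Theorem~\ref{thm:RandomWalkProperties}, Equation~\eqref{EqnloadVertex}, states that the expected total number of walk-visits to $\hat{u}$ across all of $W$ is $O(\deg(\hat{u}) \cdot \beta^{-2}\log^4 n\,\epsilon^{-2})$. Multiplying by the rejection probability $1/d \le 1/\deg(\hat{u})$ yields an expected $O(\beta^{-2}\log^4 n\,\epsilon^{-2})$ walk-occurrences to regenerate per endpoint, which is also the bound on the number of edges of $H$ changed. Added to this are the $\rho = O(\log n\,\epsilon^{-2})$ brand-new walks emanating from $(u,v)$ itself, each of length at most $O(\beta^{-2}\log^3 n)$.

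Third, to avoid paying $\Omega(\deg(\hat{u}))$ merely to enumerate occurrences at $\hat{u}$, I would describe the implicit rejection sampling: draw a geometric random variable with success probability $1/d$ to determine the index $i$ of the next occurrence of $\hat{u}$ that is flagged for regeneration; then use the rank/select structure of Lemma~\ref{lem:ReversePointers} to locate the $i$-th such occurrence in $O(\log n)$ time, and iterate. Per regenerated walk this costs $O(\log n)$ for locating the occurrence, $O(\beta^{-2}\log^3 n\,\epsilon^{-2})$ for resampling the tail under the truncation rule of Definition~\ref{def:Walk}, and $O(\log n)$ for updating the reverse pointers and the corresponding edge in $H$.

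Finally, I would assemble the cost. Summing the per-walk cost $O(\beta^{-2}\log^3 n\,\epsilon^{-2})$ over the expected $O(\beta^{-2}\log^4 n\,\epsilon^{-2})$ regenerations at each endpoint gives the claimed $O(\beta^{-4}\log^8 n\,\epsilon^{-4})$. The initial $\textsc{AddTerminal}$ branch fires with probability $\beta$ and, by Lemma~\ref{lem:Dynamic} Part~\ref{case:AddTerminal}, runs in $O(1)$ amortized time, so it contributes $O(1)$ amortized per insertion. The hard part, in my view, is not the arithmetic but the coupling argument: justifying that the independent rejection decisions across many occurrences of $\hat{u}$ within the same walk, combined with tail resampling, together yield a joint distribution over walks that matches the $\beta$-shorted walk distribution on $G \cup \{(u,v)\}$. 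The cleanest route is to argue this step-by-step along each walk, showing that the per-step transition kernel produced by the rejection rule equals the uniform-neighbor kernel in the updated graph, and then invoking the Markov property to conclude that the induced walk distribution agrees with the one in Definition~\ref{def:Walk}.
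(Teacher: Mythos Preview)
Your proposal is correct and follows essentially the same approach as the paper: bound the expected number of occurrences of $\hat{u}$ across all walks via the vertex-load estimate of Equation~\eqref{EqnloadVertex}, multiply by the $1/d$ rejection probability to get the expected number of regenerations, and then multiply by the per-walk resampling cost. The paper's proof is terser---the coupling argument for correctness and the geometric-variable trick for implicit rejection sampling are both relegated to the discussion preceding the lemma rather than the proof itself---but your analysis mirrors it step for step.
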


\begin{proof}

For now, let us assume that the $\textsc{AddTerminal}$ operation takes $O(1)$ amortized time. We next bound the expected amortized time for the remainder of the procedure.

Theorem~\ref{thm:RandomWalkProperties} Part~\ref{part:EdgeLoad}
implies that the number of occurrences
of $u$ in all the walks is bounded by
\[
O \left(\deg\left( u \right) \cdot \beta^{-2} \log^{4}n \epsilon^{-2}\right).
\]
Since, by the discussion above, each of
these is updated with probability $1 / \deg(u)$,
the expected number of walks that we regenerate is
$O(\beta^{-2} \log^4{n} \epsilon^{-2})$, giving our bound on the number of edge updates. The total amortized cost then follows analogous to
the proof of Lemma~\ref{lem:Deletion}.
\end{proof}

\subsection{Amortized Analysis}

The overall bound requires amortizing the costs of shortening
the walks caused by $\textsc{AddTerminal}(u)$ to the cost of creating
these walks in the first place.
This can be handled using a standard amortized analysis.

\begin{proof}[Proof of Lemma~\ref{lem:Dynamic}]
	
	
We first examine the operation $\textsc{Initialize}(G,T,\beta)$. 
Forming $H$ requires generating $\rho = O(\log{n} \epsilon^{-2})$ walks
from each edge of $G$ up to a length of at most $O(\beta^{-2} \log^3{n})$,
which with the overhead of maintaining reverse pointers from
Lemma~\ref{lem:ReversePointers} gives a cost of
\[
O\left( m \beta^{-2} \log^5{n} \epsilon^{-2} \right).
\]

The expected amortized time of the operations $\textsc{Delete}(u,v)$ and $\textsc{Insert}(u,v)$ follow from Lemma~\ref{lem:Deletion} and~\ref{lem:Insertion}, respectively, where we note that Lemma~\ref{lem:Insertion} assumed that $\textsc{AddTerminal}(u)$ only required $O(1)$ amortized time.
It then remains to show that this is the case.

Adding a vertex to $T$ only shortens the existing walks, and 
Lemma~\ref{lem:ReversePointers} allows us to find such walks
in time proportional to the amount of edges deleted from the walk.
Since this walk needed to be generated in either the $\textsc{Initialize}$, $\textsc{Insert}$, or $\textsc{Delete}$, then the deletion of these edges will take equivalent time to generating them.
As a result, we can account for this amortized cost by just doubling the cost of $\textsc{Initialize}$, $\textsc{Insert}$, and $\textsc{Delete}$, which does not affect their asymptotic runtime.

\end{proof}

%

\section{Better Guarantees for $(s,t)$-Resistance on Simple Dense Graphs}
\label{sec:DependOnN}

In this section we discuss a different parameterization of our data structure where we restrict to \emph{simple} graphs and only maintain the effective resistance between a small number of \emph{fixed} vertex-pairs $(s_i,t_i)$. For the sake of exposition, we only consider the case where we want to maintain the effective resistance between a single $(s,t)$ pair. It is then easy to extend our data-structure to support up to $\Otil(n^{6/7})$ fixed vertex-pairs.


\begin{theorem} \label{thm:vertexBounds}
For any given error threshold $\epsilon > 0$, and a vertex-pair $(s,t)$, there is a data structure for maintaining a $n$-vertex simple graph $G=(V,E)$ while supporting edge insertions and deletions in $G$ as well as $(s,t)$-effective resistances queries in $\Otil(n^{6/7}\epsilon^{-4})$ expected update and query time.
\end{theorem}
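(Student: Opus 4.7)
\textbf{Proof plan for Theorem~\ref{thm:vertexBounds}.}
The plan is to reuse the two-level data structure from the proof of Theorem~\ref{thm:Main}, with two modifications that exploit the restrictions of the theorem. First, since the query pair $(s,t)$ is fixed, we simply add both $s$ and $t$ to $T$ at initialization and never call \textsc{AddTerminal} during queries; this removes the query-induced growth of $|T|$, and in particular eliminates the need for the periodic rebuild used in the proof of Theorem~\ref{thm:Main}. Second, we sample $T$ by vertices rather than by edges: each vertex is included in $T$ independently with probability $\alpha$, to be chosen later. With this sampling, $|T| = \Otil(\alpha n)$ with high probability by a Chernoff bound, and the dynamic spectral sparsifier of the Schur complement onto $T$ maintained via Lemma~\ref{lem:DynamicSpectralSparsifier} has $\Otil(\alpha n \epsilon^{-2})$ edges, so answering the $(s,t)$ query on it via Lemma~\ref{lemm:efficientEffectiveResistance} takes $\Otil(\alpha n \epsilon^{-4})$ time.

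Next, I would re-prove the two parts of Theorem~\ref{thm:RandomWalkProperties} in this vertex-sampled, simple-graph regime. The load bound (Part~\ref{part:EdgeLoad}) carries over verbatim, since its proof through Lemma~\ref{lem:LULZ} does not depend on how $T$ is chosen: the load per edge remains $\Otil(\ell \epsilon^{-2})$ in terms of the walk length $\ell$. The hitting bound (Part~\ref{part:ReachT}) must be redone, because a walk must now visit enough distinct \emph{vertices} so that the probability of missing the vertex-sampled $T$, which is at most $(1-\alpha)^{k}$ if $k$ distinct vertices are visited, is polynomially small. Combining Theorem~\ref{thm:ExpectedTime} with the simple-graph bound that the maximum hitting time in any $k$-vertex simple graph is $O(k^3)$ shows that any walk of length $\ell = \Otil(\alpha^{-3})$ visits $\Omega(\alpha^{-1}\log n)$ distinct vertices with high probability, which suffices to hit $T$.

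Plugging these bounds into the analogue of Lemma~\ref{lem:Dynamic} gives the following costs. The load per edge is $\Otil(\ell \epsilon^{-2}) = \Otil(\alpha^{-3} \epsilon^{-2})$, so the analogues of Lemmas~\ref{lem:Deletion} and~\ref{lem:Insertion} yield an update time of $\Otil(\ell^{2}\epsilon^{-4}) = \Otil(\alpha^{-6}\epsilon^{-4})$ per edge insertion or deletion, including the $\Otil(\epsilon^{-2})$ per change-to-$H$ cost for maintaining the dynamic spectral sparsifier. Combining this with the query cost $\Otil(\alpha n \epsilon^{-4})$ and balancing the two as $\alpha^{-6} = \alpha n$, i.e., $\alpha = n^{-1/7}$, yields the claimed expected amortized update and query time of $\Otil(n^{6/7}\epsilon^{-4})$.

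The main obstacle is the walk length bound in the second paragraph. A naive instantiation of the argument from the proof of Theorem~\ref{thm:RandomWalkProperties} Part~\ref{part:ReachT}, which combines Theorem~\ref{thm:ExpectedTime}'s $\Omega(\sqrt{\ell})$-distinct-edges guarantee with the simple-graph fact that $k$ distinct edges span $\Omega(\sqrt{k})$ distinct vertices, only gives $\ell = \Otil(\alpha^{-4})$ and so an update time of $\Otil(\alpha^{-8}\epsilon^{-4})$, which balances to the weaker bound $\Otil(n^{8/9}\epsilon^{-4})$. Sharpening to the claimed $\ell = \Otil(\alpha^{-3})$ requires a bound specifically on the number of distinct \emph{vertices} visited by a walk rather than going through distinct edges, and this is where the $O(k^{3})$ hitting-time bound special to simple graphs is essential.
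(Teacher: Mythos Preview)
Your proposal is correct and follows essentially the same approach as the paper. One clarification: rather than combining Theorem~\ref{thm:ExpectedTime} with the classical $O(k^3)$ hitting-time bound as you sketch, the paper directly invokes the companion Barnes--Feige result (their Theorem~1.1, stated in the paper as Lemma~\ref{lem:ExpectedTimeVertices}) that in a \emph{simple} graph the expected time for a random walk to visit $\nhat$ distinct vertices is $O(\nhat^{3})$; this is exactly the ``bound specifically on the number of distinct vertices visited'' you correctly identify as the crux in your final paragraph, and with it the rest of your argument goes through verbatim.
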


In the above theorem, the improvement on the running time (for sufficiently dense graphs) comes from a result by Barnes and Feige~\cite{BarnesF96}, who give a bound on the number of distinct vertices visited in a random walk of certain length. In what follows, we will describe how to modify both the data-structure and the algorithm for maintaining dynamic Schur complements, which in turn will prove the theorem.

We start with the modification of the data-structure. In comparison to the data-structure in Figure~\ref{fig:GlobalVariables}, the key difference here is that we directly sample vertices with some probability and include them in the set of terminals $T$. This forces us to change the length of the random walks, as shown in Figure~\ref{fig:GlobalVariablesVertex}. However, we remark that the Theorem~\ref{thm:SparsifySchur} holds for arbitrary $T$, and thus it readily applies to our modification of the terminal set.

Now, for any parameter $\beta \in (0,1)$, similarly to Definition~\ref{def:Walk}, we can define the collection of $\beta$-shorted walks $W$ for the vertex version of our data-structure. Specifically, (1) we pick a subset of terminal vertices $T$, obtained by including each vertex independently, with probability at least $\beta$ and (2) the length of the random walks we generate is replaced by $O(\beta^{-3} \log^{4} n)$. 

\begin{figure}

\begin{algbox}

\begin{enumerate}

\item A subset of terminal vertices $T$, obtained by including each vertex independently, with probability at least $n^{-1/7}$.

\item A sampling overhead $\rho = O(\log n \varepsilon^{-2})$ (chosen according to Theorem~\ref{thm:SparsifySchur}).

\item Graph $H$ created by
repeating the following procedure
for each edge $e = (u,v)$,

\begin{enumerate}
\item For $i=1,\ldots,\rho$,

\begin{enumerate}
\item Generate random walks $W(e, i)$
from $u$ and $v$ until either $O(n^{3/7}\log^3{n})$ steps
have been taken, or they reach $T$.

\item If both walks reach $T$ at $t_1$ and $t_2$ respectively, then
\begin{enumerate}
\item Let $\ell$ be the number of edges on the walk $W(e,i)$.

\item Add an edge between $t_1$ and $t_2$ to $H$ with weight
$\frac{1}{\rho \ell}$.
\end{enumerate}
\end{enumerate}

\end{enumerate}

\end{enumerate}

\end{algbox}

\caption{Overall data structure, which is a
collection of $\beta$-shorted walks from with $\beta = n^{-1/7}$,
reweighted according to Theorem~\ref{thm:SparsifySchur}.}
\label{fig:GlobalVariablesVertex}
\end{figure} 
 We next review the result by Barnes and Feige~\cite{BarnesF96}.

\begin{lemma}[\cite{BarnesF96}, Theorem 1.1]
\label{lem:ExpectedTimeVertices}
There is an absolute constant $c_{BF}$ such that for any undirected, unweighted, simple $G$ with $n$ vertices, any vertex $u$ and any value $\nhat \leq n$, the expected time for a random walk starting from $u$ to visit at least $\nhat$ \emph{distinct} vertices is at most $c_{BF}\nhat^{3}$.
\end{lemma}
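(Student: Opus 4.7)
The plan is to extend the classical $O(n^3)$ cover time bound of Aleliunas--Karp--Lipton--Lovász--Rackoff so that the right-hand side depends only on the target count $\nhat$ and not on the ambient graph size $n$. The direct approach of reducing to Theorem~\ref{thm:ExpectedTime} loses a factor of $\nhat$: forcing $\binom{\nhat}{2}$ distinct edges in order to guarantee $\nhat$ distinct vertices gives $O(\nhat^4)$, which is one factor of $\nhat$ too large. So a genuinely new argument tailored to counting vertices is needed, and it must exploit the simple-graph hypothesis, since on multi-graphs the edge-count bound of Theorem~\ref{thm:ExpectedTime} is tight.

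My first step is to decompose the expected time as a telescoping sum over incremental discoveries. Let $\tau_k$ be the first time at which $k$ distinct vertices have been visited, and let $S_k$ be the corresponding set, so that
\[
\expec{}{\tau_{\nhat}}
=
\sum_{k=1}^{\nhat-1} \expec{}{\tau_{k+1} - \tau_k},
\]
and each increment equals the hitting time from the walk's current position inside $S_k$ to the complement $V \setminus S_k$ in the walk on $G$. The second step is to bound each such hitting time through the electrical-network identity $C(u,v) = 2m \cdot \er(u,v)$ of Chandra--Raghavan--Ruzzo--Smolensky--Tiwari: contract $V \setminus S_k$ into a single sink $s^{\star}$ and analyze the commute time in the contracted graph, using $\er(v, s^{\star}) \le k$ by routing unit flow along any path inside $S_k$ to its boundary. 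The simple-graph hypothesis enters here to cap the number of distinct boundary edges by $k(n-k)$ and, more importantly, to control the effective-conductance structure so that no single boundary pair can absorb too much current.

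The hard part will be closing the calculation: a naive per-increment bound is $O(k^3)$, and summing gives $O(\nhat^4)$. To recover the claimed $O(\nhat^3)$ one needs an amortization across successive discoveries, showing that the \emph{averaged} effective resistance out of $S_k$ behaves like $O(1/k)$ rather than $O(1)$. Intuitively, as $S_k$ grows the combined conductance from $S_k$ into the unexplored part of $G$ grows as well, because each new discovery typically exposes fresh boundary edges. I would follow the Barnes--Feige strategy of designing a potential function over the ordered sequence of discovered vertices whose increments telescope, so that the effective-resistance contributions collapse from $\sum_k k$ down to $O(\nhat)$ and the total cost becomes $O(\nhat^3)$, matching the statement.
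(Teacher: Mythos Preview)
The paper does not prove this lemma at all: it is quoted verbatim as Theorem~1.1 of Barnes and Feige~\cite{BarnesF96} and used as a black box, just as Theorem~\ref{thm:ExpectedTime} is in Section~\ref{sec:propertiesRandomWalk}. So there is no ``paper's own proof'' to compare against.

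On your proposal itself, the telescoping over successive discoveries and the reduction of each increment to a hitting time into the complement of the currently visited set $S_k$ are the right scaffolding, and you correctly diagnose that a naive per-increment bound yields only $O(\nhat^{4})$. But the proposal stops exactly where the content begins: the sentence ``I would follow the Barnes--Feige strategy of designing a potential function'' is the entire proof, not a step one can wave at. Two concrete issues deserve attention. First, the commute-time identity $C(v,s^{\star}) = 2m'\,\er(v,s^{\star})$ in the contracted graph is generically loose in the wrong direction here, because $m'$ can be dominated by boundary edges into $s^{\star}$ while $H(v,s^{\star})$ stays small; you need a one-sided hitting-time bound or a monotonicity argument (e.g., deleting surplus edges into the absorbing target only increases the hitting time, reducing to a simple graph on $k+1$ vertices), not commute time. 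Second, even with the right one-sided bound, going from $O(k^{3})$ per step to an amortized $O(k^{2})$ requires a specific potential over the discovery sequence; you have not proposed one, so as written the argument does not reach $c_{BF}\nhat^{3}$.
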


Now, using the above lemma and following essentially the same reasoning as in Section~\ref{sec:propertiesRandomWalk}, we get the analogue of Theorem~\ref{thm:RandomWalkProperties}.

\begin{theorem}
Let $G=(V,E)$ be any undirected simple graph,
and $\beta \in (0,1)$ a parameter such that $\beta n = \Omega( \log{n})$. 
Any set of $\beta$-shorted walks $W$,
as described above,
satisfies:
\begin{enumerate}
\item With high probability, any random walk in $W$ starting in a connected
component containing a vertex from $T$ terminates at a vertex in $T$.
\item For any edge $e$, the expected load of $e$ with respect to $W$ is $O(\beta^{-3} \log^5{n} \epsilon^{-2})$.
\end{enumerate}
\end{theorem}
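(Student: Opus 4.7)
The plan is to mirror the argument of Theorem~\ref{thm:RandomWalkProperties}, substituting Lemma~\ref{lem:ExpectedTimeVertices} (vertex cover time) for Theorem~\ref{thm:ExpectedTime} (edge cover time) at the one place where that bound is invoked. Two consequences propagate: walk lengths grow from $\widetilde{O}(\beta^{-2})$ to $\widetilde{O}(\beta^{-3})$ because the vertex cover time is cubic in the target size rather than quadratic, and terminals are now sampled per vertex (rather than per edge endpoint) with probability $\beta$. The simplicity hypothesis on $G$ is used exactly to invoke Lemma~\ref{lem:ExpectedTimeVertices}.

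First I would prove a vertex analog of Corollary~\ref{cor:NumDistinctEdges}: for any starting vertex and any $\delta \geq 1$, a walk of length $c_{BF} \cdot \delta \cdot \ell \cdot \log n$ visits at least $\ell^{1/3}$ distinct vertices with probability at least $1 - n^{-\delta}$. This follows by decomposing into $\delta \log n$ sub-walks of length $c_{BF} \ell$, applying Markov's inequality to Lemma~\ref{lem:ExpectedTimeVertices} so that each sub-walk is ``good'' with probability at least $1/2$, and taking a product bound followed by a union bound over starting vertices.

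For Part~1, I would set $\ell = c_{BF} \cdot \delta^{3} \cdot \beta^{-3} \log^{4} n$ so that the vertex corollary yields at least $z := \delta \cdot \beta^{-1} \log n$ distinct visited vertices with high probability. If the connected component of the start has fewer than $z$ vertices, the walk has covered the whole component and thus hits any initial terminal residing in it; otherwise I reveal the walk $w$ before $T$ and bound
\[
\prob{T,w}{V(w) \cap T = \emptyset} \leq \prob{w}{|V(w)| \leq z} + \max_{\widehat{w}\,:\,|V(\widehat{w})| \geq z} \prob{T}{V(\widehat{w}) \cap T = \emptyset}.
\]
The first term is at most $n^{-\delta}$ by the vertex corollary; for the second, each of the $\geq z$ visited vertices is independently included in $T$ with probability $\beta$, so the term is at most $(1-\beta)^{z} \leq e^{-\delta \log n} = n^{-\delta}$.

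For Part~2 the proof of Lemma~\ref{lem:LULZ} is untouched---it only uses reversibility of simple random walks---so for each $t \geq 0$ and each vertex $\uhat$ we still have $\sum_{u \in V} \deg(u) \cdot \prob{}{\uhat \in X_u(t)} = \deg(\uhat)$. Summing over $0 \leq t \leq \ell$ and multiplying by the sampling overhead $\rho = O(\log n \cdot \epsilon^{-2})$ gives $\expec{}{N_{\uhat}} \leq \deg(\uhat) \cdot \ell \cdot \rho = O(\deg(\uhat) \cdot \beta^{-3} \log^{5} n \cdot \epsilon^{-2})$, and then the identity $\expec{}{N_{\ehat}} = \expec{}{N_{\uhat}}/\deg(\uhat) + \expec{}{N_{\vhat}}/\deg(\vhat)$ for $\ehat = (\uhat,\vhat)$ cancels the degree factors and yields the claimed $O(\beta^{-3} \log^{5} n \epsilon^{-2})$ edge load. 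I expect no real obstacle: the entire argument is a mechanical translation of the edge-sampling proof, with the only genuinely new ingredient being the cubic vertex cover time from Lemma~\ref{lem:ExpectedTimeVertices}, which is precisely where the simplicity of $G$ is consumed.
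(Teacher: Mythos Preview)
Your proposal is correct and is exactly the approach the paper intends: the paper does not give a standalone proof here but simply states that the result follows ``using the above lemma and following essentially the same reasoning as in Section~\ref{sec:propertiesRandomWalk},'' which is precisely the mechanical translation you carry out (vertex cover time in place of edge cover time, terminals sampled per vertex, walk length $\widetilde{O}(\beta^{-3})$, and Lemma~\ref{lem:LULZ} reused verbatim). The only cosmetic point is that the power of $\delta$ in your choice of $\ell$ should be $\delta^{4}$ rather than $\delta^{3}$ to exactly match the corollary arithmetic, but since $\delta$ is a constant this is immaterial to the $O(\cdot)$ bounds.
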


Following the ideas we presented in Section~\ref{sec:Overview}, we can use the above theorem to construct the data-structure that maintains a dynamic Schur complement, i.e., \textsc{DynamicSC}$(G,T,\beta)$. However, one difference here is that the terminal additions are not supported, and the update times are no longer amortized. We implement the insertions and deletions of edges using Lemmas~\ref{lem:Deletion} and \ref{lem:Insertion}. Note that because we randomly pick vertex subsets, we do not need to regenerate augmentations to $T$, i.e., Line 1 of Figure~\ref{fig:InsertEdge} is no longer useful in our \textsc{Insert} routine. The guarantees of these modifications are summarized in the lemma below.


\begin{lemma}
\label{lem:Dynamic2}
Given an undirected multi-graph $G=(V,E)$ a subset of
terminal vertices $T$, and a parameter $\beta$ such that
$\beta n = \Omega( \log{n})$,
\textsc{DynamicSC}$(G,T, \beta)$ maintains the collection of $\beta$-shorted walks $W$, and in turn a graph $H$ that is with high probability a
sparsifier of $\textsc{SC}(G, T)$, while supporting its operations in the following running times: 
\begin{enumerate}
\item \textsc{Initialize$(G,T,\beta)$} in $O(m\beta^{-3} \log^{5} n \epsilon^{-2})$ expected time.
\label{case:Initialize2}
\item \textsc{Insert$(u,v)$} in
$O(\beta^{-6} \log^9{n} \epsilon^{-2})$ expected  time.
\label{case:Insert2}
\item \textsc{Delete$(u,v)$} in
$O(\beta^{-6} \log^9{n} \epsilon^{-2})$ expected time.
\label{case:Delete2}
\end{enumerate}
Furthermore, each of these operations leads to a number of changes
to $H$ bounded by the corresponding costs.
\end{lemma}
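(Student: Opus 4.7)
The plan is to mirror the proof of Lemma~\ref{lem:Dynamic}, substituting the walk-length bound $O(\beta^{-3}\log^3 n)$ and the per-edge load bound $O(\beta^{-3}\log^5 n \epsilon^{-2})$ from the vertex-sampled analogue of Theorem~\ref{thm:RandomWalkProperties} stated just above. Since there is no $\textsc{AddTerminal}$ operation in this variant, no amortization is needed: all bounds become expected (rather than expected amortized), and the entire argument reduces to substituting the new parameters into the arithmetic used in Lemmas~\ref{lem:Deletion} and~\ref{lem:Insertion}.

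For $\textsc{Initialize}(G,T,\beta)$, I would first sample each vertex independently into $T$ with probability $\beta$ in $O(n)$ time, and then for every edge $e=(u,v)$ launch $\rho=O(\log n\,\epsilon^{-2})$ walks of length up to $O(\beta^{-3}\log^3 n)$ from each endpoint. As noted at the start of Section~\ref{sec:Dynamic}, each step of a random walk can be simulated in $O(1)$ time, so generating the walks costs $O(m\,\rho\,\beta^{-3}\log^3 n)=O(m\beta^{-3}\log^4 n\,\epsilon^{-2})$, and maintaining the reverse pointers of Lemma~\ref{lem:ReversePointers} adds a factor of $O(\log n)$, giving the claimed $O(m\beta^{-3}\log^5 n\,\epsilon^{-2})$ initialization cost.

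For $\textsc{Delete}(u,v)$, I would reuse the reasoning of Section~\ref{subsec:Deletions} unchanged: conditioning on not using $(u,v)$, every other neighbor of $u$ was already chosen with the post-deletion probability $1/\deg(u)$, so only walks that actually traversed $(u,v)$ need to be regenerated. By Part~2 of the preceding theorem the expected number of such walks is $O(\beta^{-3}\log^5 n\,\epsilon^{-2})$, and each regeneration is a walk of length $O(\beta^{-3}\log^3 n)$ with an $O(\log n)$ overhead from Lemma~\ref{lem:ReversePointers}, for a total of $O(\beta^{-6}\log^9 n\,\epsilon^{-2})$. For $\textsc{Insert}(u,v)$, I would drop Step~1 of Figure~\ref{fig:InsertEdge} (the random augmentation of $T$, which is unnecessary since the terminals are now chosen by vertex), then apply the same rejection-sampling trick: at each occurrence of $u$ in an existing walk, independently use the new edge $(u,v)$ with probability $1/\deg(u)$ and regenerate the walk from that point. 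Using the vertex-load version of Lemma~\ref{lem:LULZ}, the expected number of occurrences of $u$ across all walks is $O(\deg(u)\cdot\beta^{-3}\log^5 n\,\epsilon^{-2})$, and multiplying by the acceptance probability $1/\deg(u)$ gives $O(\beta^{-3}\log^5 n\,\epsilon^{-2})$ expected regenerations, each costing $O(\beta^{-3}\log^3 n)$ plus an $O(\log n)$ reverse-pointer overhead, matching the claimed $O(\beta^{-6}\log^9 n\,\epsilon^{-2})$.

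The main point to verify is that the load bound really does propagate from the edge form $O(\beta^{-3}\log^5 n\,\epsilon^{-2})$ into a bound of the form $O(\deg(u)\cdot\beta^{-3}\log^5 n\,\epsilon^{-2})$ on the expected number of occurrences of any vertex $u$; this is exactly the step carried out in the proof of Theorem~\ref{thm:RandomWalkProperties} Part~\ref{part:EdgeLoad} via Lemma~\ref{lem:LULZ}, which only uses that the walks are independent of $T$ and run for a fixed length, so it transfers verbatim with the new walk length $O(\beta^{-3}\log^3 n)$. Once this observation is in place, the per-operation bounds on the number of updates to $H$ follow from the same counting, and since each update to $W$ translates to $O(\log n)$ work in $H$ by Lemma~\ref{lem:ReversePointers}, the bound on changes to $H$ is bounded by the corresponding running times as claimed.
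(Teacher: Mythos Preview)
Your proposal is correct and follows exactly the approach the paper takes: Lemma~\ref{lem:Dynamic2} is not given a separate proof in the paper but is stated as the direct analogue of Lemma~\ref{lem:Dynamic} (via Lemmas~\ref{lem:Deletion} and~\ref{lem:Insertion}) with the vertex-sampled walk length and load bounds substituted in, and with Step~1 of Figure~\ref{fig:InsertEdge} dropped so that no amortization via \textsc{AddTerminal} is needed. Your explicit arithmetic matches the claimed bounds.
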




Putting together the bounds in the above lemma proves our vertex based bounds, i.e., Theorem~\ref{thm:vertexBounds}.

\begin{proof}[Proof of Theorem~\ref{thm:vertexBounds}]

We present our two-level data-structure for dynamically maintaining $(s,t)$-effective resistances. Specifically, we include both $s$ and $t$ to $T$, and keep the terminal set $T$ of size $\Theta(\beta n)$. This entails maintaining
\begin{enumerate}
\item an approximate Schur complement $H$ of $G$~(Lemma~\ref{lem:Dynamic2}),
\item a dynamic spectral sparsifier $\tilde{H}$ of $H$~(Lemma~\ref{lem:DynamicSpectralSparsifier}).
\end{enumerate}

We now describe the update and query operations. Specifically, whenever an edge insertion or deletion is performed in the current graph, we pass the corresponding update to the first data-structure~(which handles this by Lemma~\ref{lem:Dynamic2} Parts~\ref{case:Insert2} and~\ref{case:Delete2}). This update in turn will trigger other updates in $H$, which are then handled by our second data-structure for $\tilde{H}$. Next, upon receiving a query about the $(s,t)$ effective resistance, we compute the (approximate) effective resistance between $s$ and $t$ in the graph $\tilde{H}$ using Lemma~\ref{lemm:efficientEffectiveResistance}.

Similarly to the proof of Theorem~\ref{thm:Main}, we can show that the pre-processing time is $\Otil(m\beta^{-3} \epsilon^{-4})$, the expected update time is $\Otil(\beta^{-6} \epsilon^{-4})$, and the query time is $\Otil(\beta n \epsilon^{-4})$.

Combining the bounds on the update and query time, we obtain the following trade-off
\[
	\tilde{O}\left((\beta n+ \beta^{-6})\epsilon^{-4}\right),
\]
which is minimized when $\beta = n^{-1/7}$,
thus giving an expected update and query time of 
\[
\tilde{O}\left(n^{6/7}\epsilon^{-4}\right).
\]
\end{proof}


\subsection*{Acknowledgements}
We thank Daniel D. Sleator for helpful comments on an earlier draft of the manuscript.


\bibliographystyle{alpha}
\bibliography{ref}

\begin{appendix}

\section{Schur Complement Sparsifier from Sum of Random Walks}
\label{sec:SchurComplement}

In this section, we prove Theorem~\ref{thm:SparsifySchur},
which states that sampling random walks generates sparsifiers of Schur complements:
\SparsifySchur*

Note that this rescaling by $1 / \rho \ell$ is quite natural:
in the degenerate case where $T=V$,
this routine generates $\rho$ copies of each edge,
which then need to be rescaled by $1 / \rho$ to ensure approximation
to the original graph. 

Similar to other randomized graph sparsification
algorithms~\cite{SpielmanS08:journal,KoutisLP15,AbrahamDKKP16,DurfeePPR17,JindalKPS17},
our sampling scheme directly interacts with Chernoff bounds. Our random matrices are `groups' of edges related to random walks
starting from the edge $e$. We will utilize Theorem 1.1 due to~\cite{Tropp12}, which we paraphrase in our notion of approximations.

\begin{theorem} 
Let $\normalfont \XX_{1}, \XX_{2} \ldots \XX_{k}$ be a set of random matrices satisfying the following properties:
\begin{enumerate}
\item Their expected sum is a projection operator onto some subspace, i.e., 
\[
\normalfont \sum_{i} \expec{}{\XX_i} = \PPi.
\]
\item For each $\XX_{i}$, its entire support satisfies:
\[
0 \preceq \XX_{i} \preceq \frac{\epsilon^2}{O\left( \log{n} \right)} \II.
\]
\end{enumerate}
Then, with high probability, we have
\[
\sum_{i} \XX_{i} \approx_{\epsilon} \PPi.
\]
\end{theorem}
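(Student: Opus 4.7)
The plan is to derive the statement as a direct translation of Tropp's matrix Chernoff inequality (Theorem 1.1 of~\cite{Tropp12}) into the projection-valued setting. Tropp's theorem handles independent PSD summands whose expected sum is a scalar multiple of the identity; here the expected sum is a projection $\PPi$ and the approximation is measured against $\PPi$ rather than $\II$, so the only work is to reconcile these two differences via a subspace reduction before quoting the standard bound.

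First I would restrict attention to the subspace $U := \mathrm{range}(\PPi)$ of dimension $d \le n$. The key observation is that because every $\XX_i \succeq 0$ and $\sum_i \expec{}{\XX_i} = \PPi$ vanishes on $\ker(\PPi)$, any $v \in \ker(\PPi)$ satisfies $\expec{}{v^T \XX_i v} = 0$ with $v^T \XX_i v \ge 0$ almost surely, forcing $v^T \XX_i v = 0$ almost surely. Hence the entire support of each $\XX_i$ lies in $U$, on which $\PPi$ acts as the identity $\II_U$. In this subspace the hypotheses become exactly the standard form of the matrix Chernoff setup: independent PSD summands with $\sum_i \expec{}{\XX_i} = \II_U$ and $\XX_i \preceq R\,\II_U$ for $R = \epsilon^2/(C\log n)$, where $C$ is an absolute constant to be chosen.

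Second, I would invoke Tropp's bound on $U$. For $\epsilon \in (0,1/2]$ its standard form gives
\[
\prob{}{\lambda_{\max}\!\left(\textstyle\sum_i \XX_i\right) \ge 1+\epsilon} + \prob{}{\lambda_{\min}\!\left(\textstyle\sum_i \XX_i\right) \le 1-\epsilon} \;\le\; 2d \cdot \exp\!\left(-\frac{\epsilon^2}{3R}\right) \;=\; 2d \cdot n^{-C/3}.
\]
Choosing $C$ sufficiently large drives this to $n^{-\Omega(1)}$, so $(1-\epsilon)\II_U \preceq \sum_i \XX_i \preceq (1+\epsilon)\II_U$ on $U$ with high probability. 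Since both sides vanish on $\ker(\PPi)$, lifting back to $\mathbb{R}^n$ yields $(1-\epsilon)\PPi \preceq \sum_i \XX_i \preceq (1+\epsilon)\PPi$, which is precisely $\sum_i \XX_i \Approx{\epsilon} \PPi$ in the sense used throughout the paper.

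The main obstacle, and essentially the only step requiring any argument beyond a citation, is the subspace reduction itself. A naive application of Tropp's inequality on all of $\mathbb{R}^n$ would try to lower bound $\lambda_{\min}(\sum_i \XX_i)$ globally, but this is vacuous because $\PPi$ is singular and the ambient minimum eigenvalue is forced to $0$. Showing that each summand is supported on $\mathrm{range}(\PPi)$ is what legitimately reduces the projection-valued hypothesis to Tropp's identity-valued setting. A minor caveat worth flagging is that the statement does not explicitly declare the $\XX_i$ to be mutually independent; this is an implicit assumption inherited from the matrix Chernoff framework and is needed for the Laplace-transform step inside Tropp's proof to go through.
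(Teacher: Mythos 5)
Your proposal is correct, but it is worth noting that the paper does not prove this statement at all: it is presented as a paraphrase of Theorem 1.1 of Tropp's matrix Chernoff bound, cited as a black box, so there is no in-paper argument to compare against. What you have written is exactly the routine-but-nonzero work needed to justify that paraphrase: the hypotheses here have $\sum_i \expec{}{\XX_i}$ equal to a (generally singular) projection $\PPi$ rather than a multiple of $\II$, and your reduction — observing that $\expec{}{v^T \XX_i v} = 0$ together with $\XX_i \succeq 0$ forces each $\XX_i$ to be supported on $\mathrm{range}(\PPi)$ almost surely, so that one may apply the identity-normalized Chernoff bound on that subspace and lift back — is the right way to bridge the gap, and your parameter bookkeeping ($R = \epsilon^2/(C\log n)$, failure probability $d\cdot n^{-C/3}$) matches the intended ``with high probability'' conclusion. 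Two small points to tighten: first, the almost-sure vanishing should be upgraded from ``for each fixed $v \in \ker(\PPi)$'' to ``simultaneously for all such $v$,'' which follows by applying your argument to a finite basis of $\ker(\PPi)$ and using that $v^T \XX_i v = 0$ with $\XX_i \succeq 0$ implies $\XX_i v = 0$, with a union bound over the finitely many null events; second, your caveat about independence is apt — the paper's statement omits it, but it is needed for the Laplace-transform machinery and is satisfied in the paper's application, where the walk groups are sampled independently.
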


Re-normalizations of these bounds similar to the work of~\cite{SpielmanS08:journal}
give the following graph theoretic interpretation of the theorem above.
\begin{corollary}
\label{cor:Sparsify}
Let $E_1 \ldots E_k$ be distributions over random edges satisfying the following properties:
\begin{enumerate}
\item Their expectation sums to the graph $G$, i.e.,
\[
\sum_{i} \expec{}{E_i} = G.
\]
\item For each $E_{i}$, any edge in its support has
low leverage score in $G$, i.e., 
\[
\normalfont \ww_{e} \er^{E_i} \left( e \right)
\leq
\frac{\epsilon^2}{O\left( \log{n} \right)}.
\]
\end{enumerate}
Then, with high probability, we have
\[
\sum_{i} \LL_{E_{i}} \approx_{\epsilon} \LL_{G}. 
\]
\end{corollary}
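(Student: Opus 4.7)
The plan is to derive Corollary~\ref{cor:Sparsify} as a direct change-of-variables from the matrix Chernoff statement immediately above it. The idea is standard: conjugate everything by $\LL_G^{\dag/2}$ so that the graph-Laplacian statement $\sum_i \LL_{E_i} \approx_\epsilon \LL_G$ becomes the projection statement $\sum_i \XX_i \approx_\epsilon \PPi$ required by the theorem.

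Concretely, I would define the random matrices
\[
\XX_i \defeq \LL_G^{\dag/2}\,\LL_{E_i}\,\LL_G^{\dag/2},
\]
where for a realization of $E_i$ that yields an edge $e = (u,v)$ of weight $w_e$, $\LL_{E_i} = w_e (\boldone_u - \boldone_v)(\boldone_u - \boldone_v)^T$. The first hypothesis of the corollary says $\sum_i \expec{}{\LL_{E_i}} = \LL_G$, so by linearity
\[
\sum_i \expec{}{\XX_i} = \LL_G^{\dag/2} \LL_G \LL_G^{\dag/2} = \PPi,
\]
the orthogonal projection onto the image of $\LL_G$; this verifies hypothesis~(1) of the Chernoff statement. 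For hypothesis~(2), each realization of $\XX_i$ is a rank-one PSD matrix whose operator norm equals
\[
w_e\,(\boldone_u - \boldone_v)^T \LL_G^{\dag} (\boldone_u - \boldone_v) = w_e\,\er^G(e).
\]
By Rayleigh monotonicity, for any subgraph $E_i$ of $G$ we have $\er^G(e) \le \er^{E_i}(e)$, so the stated leverage-score assumption $w_e \er^{E_i}(e) \le \epsilon^2/O(\log n)$ implies $\XX_i \preceq \frac{\epsilon^2}{O(\log n)} \II$ on $\mathrm{im}(\LL_G)$, meeting the support bound.

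Applying the matrix Chernoff theorem then yields $\sum_i \XX_i \approx_\epsilon \PPi$ with high probability, after which I undo the conjugation by multiplying on both sides by $\LL_G^{1/2}$ (restricted to $\mathrm{im}(\LL_G)$). This produces
\[
\sum_i \LL_{E_i} = \LL_G^{1/2}\Bigl(\sum_i \XX_i\Bigr)\LL_G^{1/2} \approx_\epsilon \LL_G^{1/2}\, \PPi\, \LL_G^{1/2} = \LL_G,
\]
which is the desired conclusion.

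The only real subtlety—and the step I would be most careful with—is handling the kernel of $\LL_G$. Since $\LL_G$ is singular (the constant vector on each connected component is in its nullspace), one must work with the Moore–Penrose pseudo-inverse and verify that both $\LL_{E_i}$ and the kernel of $\LL_G$ coincide: every $E_i$ has support contained in $G$, so $\mathrm{ker}(\LL_G) \subseteq \mathrm{ker}(\LL_{E_i})$, which means the conjugation-and-unconjugation is lossless and $\PPi$ acts as the identity on all vectors that appear in the quadratic form $\xx^T \LL_G \xx$. Once this is observed, the Rayleigh-style renormalization argument from~\cite{SpielmanS08:journal} applies verbatim and the corollary follows.
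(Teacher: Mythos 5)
Your proposal is correct and is essentially the argument the paper intends: the paper justifies the corollary only by appealing to the standard renormalization of the matrix Chernoff bound in the style of \cite{SpielmanS08:journal}, which is exactly the conjugation by $\LL_G^{\dag/2}$ (with the pseudo-inverse/kernel caveat) that you spell out. The only cosmetic difference is your Rayleigh-monotonicity step: the superscript in the hypothesis $\ww_e \er^{E_i}(e)$ is best read, as the surrounding phrase ``low leverage score in $G$'' indicates, as effective resistance measured in the target graph $G$, so no comparison between $\er^{E_i}$ and $\er^{G}$ is needed and $E_i$ need not be a subgraph of $G$.
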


To fit the sampling scheme outlined in Theorem~\ref{thm:SparsifySchur}
into the requirements of Corollary~\ref{cor:Sparsify},
we need (1) a specific interpretation of Schur complements in terms of walks, and (2) a bound on the effective resistances between two vertices at a given distance.

Given a walk $w = u_0,\ldots,u_{\ell}$ of length $\ell$ in $G$ with a subset a vertices $T$, we say that $w$ is a \emph{terminal-free} walk iff $u_0,u_{\ell} \in T$ and $u_1,\ldots,u_{\ell-1} \in V \setminus T$.

\begin{fact}[\cite{DurfeePPR17}, Lemma 5.4]\label{fact:WalkDecomposition}
\footnote{We state the lemma for unit weighted graphs. The version cited is~\url{https://arxiv.org/pdf/1705.00985v1.pdf}
There may be updates to this arXiv manuscript in the near future.}
For any undirected, unweighted graph $G$
and any subset of vertices $T \subseteq V$,
the Schur complement $\SC(G,T)$ is given as an union over all multi-edges corresponding to terminal-free walks $u_{0}, \ldots ,u_{\ell}$ with weight
\[
\prod_{i = 1}^{\ell - 1} \frac{1}{\deg\left( u_{i} \right)}.
\]
\end{fact}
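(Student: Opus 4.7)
}

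The plan is to derive the claimed walk-decomposition directly from the algebraic formula
\[
\SC(G,T) \;=\; \LL_{[T,T]} \;-\; \LL_{[T,F]}\,\LL_{[F,F]}^{\dag}\,\LL_{[F,T]},
\]
by expanding the middle inverse into a Neumann series whose terms are in bijection with random walks in $F$. Concretely, let $F = V\setminus T$, and observe that because there are no self-loops within a block, $\LL_{[T,F]} = -\AA_{[T,F]}$, $\LL_{[F,T]} = -\AA_{[F,T]}$, and $\LL_{[F,F]} = \DD_F - \AA_{[F,F]}$, where $\DD_F$ records $G$-degrees on $F$. Writing $\MM := \DD_F^{-1}\AA_{[F,F]}$ for the random-walk transition matrix restricted to $F$, we have $\LL_{[F,F]} = \DD_F(\II - \MM)$, so that
\[
\LL_{[F,F]}^{-1} \;=\; (\II-\MM)^{-1}\DD_F^{-1} \;=\; \sum_{k=0}^{\infty} \MM^{k}\DD_F^{-1}.
\]
Convergence of this series is the key analytic point, and it holds precisely when every connected component of the induced subgraph $G[F]$ contains at least one vertex with an edge to $T$ (which is also the condition under which $\LL_{[F,F]}$ is invertible, so that $\SC(G,T)$ is well-defined in the usual sense; otherwise one restricts to the component containing $T$ and uses the pseudoinverse).

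Next I would expand the off-diagonal entry $\SC(G,T)[s,t]$ for distinct $s,t\in T$ and identify it with a sum of walk-weights. Substituting the series gives
\[
-\SC(G,T)[s,t] \;=\; \AA[s,t] \;+\; \sum_{k\ge 0}\bigl(\AA_{[T,F]}\,\MM^{k}\DD_F^{-1}\,\AA_{[F,T]}\bigr)[s,t].
\]
Expanding the $k$th term as an explicit sum over indices $u_1,\dots,u_{k+1}\in F$, and using $\MM[u,v] = \AA[u,v]/\deg(u)$, I would obtain
\[
\sum_{u_1,\dots,u_{k+1}\in F} \AA[s,u_1]\,\AA[u_1,u_2]\cdots \AA[u_k,u_{k+1}]\,\AA[u_{k+1},t]\;\prod_{i=1}^{k+1}\frac{1}{\deg(u_i)}.
\]
Relabelling $s=u_0$, $t=u_{\ell}$ with $\ell = k+2$, each nonzero term corresponds to exactly one terminal-free walk $u_0,u_1,\dots,u_{\ell}$ in $G$ (interior vertices in $F$), weighted by $\prod_{i=1}^{\ell-1}\deg(u_i)^{-1}$; for multigraphs the parallel edges are counted with the correct multiplicities because each $\AA[u_i,u_{i+1}]$ factor equals the number of parallel edges. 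The direct-edge term $\AA[s,t]$ accounts for walks of length $\ell=1$, in which the empty product is $1$, matching the claim.

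Finally, I would confirm that the diagonal entries behave correctly, which follows because the rows of the Laplacian $\SC(G,T)$ must sum to $0$, so they are determined by the off-diagonal entries already identified; equivalently, one can recognise the full walk-decomposition as the Laplacian of the multigraph whose edges are the terminal-free walks with the stated weights. The main obstacle I expect is purely bookkeeping: showing that the formal bijection between index tuples and walks preserves multiplicities for multigraphs, and verifying the Neumann-series convergence when $\LL_{[F,F]}$ is singular and one must use $\LL_{[F,F]}^{\dag}$ (handled by decomposing into components and restricting to the subspace orthogonal to the null space of $\LL_{[F,F]}$, where $\MM$ has spectral radius strictly less than $1$).
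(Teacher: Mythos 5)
Your proposal is correct, but note that the paper never proves Fact~\ref{fact:WalkDecomposition} at all: it is imported verbatim from \cite{DurfeePPR17} (Lemma 5.4), so there is no in-paper argument to compare against. Your Neumann-series route is the standard derivation and is essentially the content of the cited lemma: writing $\LL_{[F,F]}=\DD_F(\II-\MM)$ with $\MM=\DD_F^{-1}\AA_{[F,F]}$ and expanding $(\II-\MM)^{-1}=\sum_{k\ge 0}\MM^k$ puts the terms of $\AA_{[T,F]}\MM^{k}\DD_F^{-1}\AA_{[F,T]}$ in weight-preserving bijection with terminal-free walks of length $k+2$, with the factor $\prod_i \AA[u_i,u_{i+1}]$ correctly accounting for parallel-edge multiplicities; the length-one walks come from $\LL_{[T,T]}$, and your signs check out. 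The remaining points you flag are indeed routine: convergence holds exactly when each component of $G[F]$ has an edge to $T$ (equivalently $\LL_{[F,F]}$ nonsingular), components of $F$ not touching $T$ contribute neither to the Schur complement (their block of $\LL_{[F,T]}$ vanishes) nor to any terminal-free walk, and the diagonal follows from the fact that both $\SC(G,T)$ and the Laplacian of the walk multigraph have zero row sums, which also makes closed walks returning to the same terminal harmless, as they only produce self-loops. An equivalent alternative, closer to how such statements are often proved in the elimination literature, is induction on eliminating one vertex of $F$ at a time, splicing walks through the eliminated vertex; it avoids any convergence discussion but yields the same bookkeeping. Either way, your argument is a sound, self-contained proof of the fact the paper merely cites.
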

The fact below follows by repeatedly applying the triangle inequality of the effective resistances between two vertices.
\begin{fact}
\label{fact:ERBound}
In an unweighted undirected graph $G$, the effective resistance
between two vertices that are distance $\ell$ apart is at most $\ell$.
\end{fact}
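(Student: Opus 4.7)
The plan is to combine two classical facts about electrical networks and telescope along a shortest path. I would use (i) the triangle inequality $\er^G(u,v) \leq \er^G(u,w) + \er^G(w,v)$ for any intermediate vertex $w$, which holds because effective resistance is the well-known \emph{resistance distance} on a connected graph (Klein--Randi\'c), and (ii) the single-edge bound $\er^G(x,y) \leq 1$ for any edge $(x,y) \in E$, which is an immediate consequence of Rayleigh's monotonicity principle applied to the subgraph of $G$ consisting solely of the edge $(x,y)$ (whose effective resistance is exactly its unit resistance $1$, and the full graph $G$ is a supergraph).

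Given (i) and (ii), the conclusion follows by fixing any shortest path $u = v_0, v_1, \ldots, v_\ell = v$ of length equal to the graph distance $\ell$, and then telescoping:
\[
\er^G(u,v) \;\leq\; \sum_{i=1}^{\ell} \er^G(v_{i-1}, v_i) \;\leq\; \sum_{i=1}^{\ell} 1 \;=\; \ell.
\]

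Both ingredients are classical parts of the theory of electrical networks (as surveyed in the Doyle--Snell monograph referenced earlier). I do not anticipate any substantive obstacle; the only implicit requirement is that $u$ and $v$ lie in the same connected component of $G$, which is already necessary for the graph distance $\ell$ to be finite in the first place.
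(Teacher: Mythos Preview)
Your proposal is correct and follows exactly the approach the paper indicates: the paper states that Fact~\ref{fact:ERBound} ``follows by repeatedly applying the triangle inequality of the effective resistances between two vertices,'' which is precisely your telescoping argument along a shortest path combined with the single-edge bound. Your write-up is in fact more explicit than the paper's, since you spell out the Rayleigh-monotonicity justification for $\er^G(x,y)\le 1$ on each edge, which the paper leaves implicit.
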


Combining the above results gives the guarantees of our sparsification routine.

\begin{proof}[Proof of Theorem~\ref{thm:SparsifySchur}]
For every edge $e \in E$, let $W_e$ be the random graph corresponding the the terminal-free random walk that started at edge $e$. Define $H = \rho \cdot \sum_{e} W_e$ to be the output graph by our sparsification routine, where $\rho= O(\log n \epsilon^{-2})$ is the sampling overhead. To prove that $\LL_H \approx_{\epsilon} \SC(G,T)$ with high probability, we need to show that (1) $\expec{}{H} = \SC(G, T)$ and (2) for any edge $f$ in $W_e$, its leverage score $\ww_f \er^{W_e}( f )$  is at most $\leq \epsilon^{2}/ \log n$ (by Corollary~\ref{cor:Sparsify}). Note that (2) immediately follows from the effective resistance bound of Fact~\ref{fact:ERBound} and the choice of $\rho =O(\log n / \epsilon^2)$. We next show (1).

To this end, we start by describing the decomposition of $\SC(G, T)$ into
random multi-edges, which correspond to random terminal-free walks in Fact~\ref{fact:WalkDecomposition}. The main idea is to sub-divide each walk 
$u_0 \ldots u_{\ell}$ of length $\ell$ in $G$ into $\ell$ walks of the same length, each starting at one of the $\ell$ edges on the walk, and each having weight
\[
\frac{1}{\ell}
\cdot
\prod_{i = 1}^{\ell - 1} \frac{1}{\deg\left( u_{i} \right)}.
\]
By construction of our sparsification routine, note that every random graph $W_e$ is a distribution over walks $u_0 \ldots u_{\ell}$, each picked with probability
\[
\prod_{i = 1}^{\ell - 1} \frac{1}{\deg\left( u_{i} \right)}.
\]
Thus, to retain expectation, when such a walk is picked, our routine correctly adds it to $H$ with weight $1/(\rho \ell).$ 

Formally, we get the following chain of equalities
\begin{align*}
 \expec{}{H} & = \rho \cdot \sum_{e} \expec {}{W_e}  \\
             & = \rho \cdot \sum_{e} \sum_{ w = u_0, u_1 \ldots u_{\ell\left( w \right)} : w \ni e} \frac{1}{\rho \ell\left( w \right)} \cdot  \prod_{i=1}^{\ell(w)-1} \frac{1}{\deg\left( u_{i} \right)}  \\
             & = \sum_{w = u_0, u_1 \ldots u_{\ell\left( w \right)}} \sum_{e : e \in w} \frac{1}{\ell\left(w \right)} \cdot \prod_{i=1}^{\ell(w)-1} \frac{1}{\deg\left( u_{i} \right)}  \\
             & = \sum_{w = u_0, u_1 \ldots u_{\ell\left( w \right)}} \prod_{i=1}^{\ell\left( w \right)-1}
                \frac{1}{\deg\left( u_{i} \right)}  \\ 
             &= \SC(G,T).
\end{align*}
\end{proof}

%

\section{A Unified View of Flows and Paths}
\label{sec:Flows}


We provide a brief overview of numerical formulations of flows
that capture combinatorial problems including shortest paths, maximum flows, and effective resistances.
This view is well known in the literature of using continuous methods
for combinatorial optimization problems~\cite{ChristianoKMST11,Madry11:thesis}.

For an orientation of edges of a graph $G$ with
$n$ vertices and $m$ edges, we can define the
edge-vertex incidence matrix $\BB \in \mathbb{R}^{m \times n}$ as:
\[
\BB_{e, u}
:=
\begin{cases}
1 & \text{if $e$ is the head of $u$},\\
-1 & \text{if $e$ is the tail of $u$},\\
0 & \text{otherwise}.
\end{cases}
\]

Then a flow from $s$ to $t$ is a vector $\ff$ on edges such that
\[
\BB^{T} \ff = \cchi_{st},
\]
where $\cchi_{st}$ is the indicator vector with $1$ at $t$,
$-1$ at $s$, and $0$ everywhere else.

Furthermore, for any $p \geq 1$, we can define the $p$-norm of a flow $\ff$ via
\[
\norm{\ff}_{p}
:=
\left( \sum_{e} \abs{\ff_{e}}^{p} \right)^{1/p}.
\]

Shortest paths, maximum flows, and electrical flows~(effective resistances)
on undirected graphs are all instances of the following optimization problem:
\begin{align*}
\min \qquad & \norm{\ff}_p\\
\text{subject to:} \qquad & \BB^{T} \ff = \cchi_{st}.
\end{align*}
Specifically, we distinguish the following cases:
\begin{enumerate}
\item When $p = 1$, we get the shortest path problem between $s$ and $t$. Replacing $\cchi_{st}$ with a more general demand vector $\dd$, we get
the transshipment problem~\cite{Sherman17}.
\item When $p = \infty$, we get the problem of minimizing congestion,
which is equivalent to routing the maximum amount of flow from $s$ to $t$
subject to at most $1$ unit per edge, or in turn the undirected 
max-flow/min-cut problem.
\item When $p = 2$, we get the $s-t$ electrical flow problem.
Here, since $\norm{\ff}_2^{2}$ is differentiable, we have
\[
\ff^{T} \Delta = 0
\]
for any `change' that is a circulation, i.e, $\BB^T \Delta = 0$.
The properties of column or row spaces then imply that
\[
\ff = \BB \pphi,
\]
for a voltage vector $\pphi$, for which we can then solve to get
\[
\BB^{T} \BB \pphi = \cchi_{s, t},
\]
or $\pphi = \LL^{\dag} \cchi_{s, t}$, since $\LL = \BB^{T} \BB$.
The energy of the resulting flow $\ff$ is then:
\[
\norm{\ff}_2^2
=
\norm{\BB \LL^{\dag} \cchi_{s, t}}_2^2
= \cchi_{s, t}^{T} \LL^{\dag} \cchi_{s, t}, 
\]
which is exactly the definition of $s$-$t$ effective resistance
from Section~\ref{sec:Preliminaries}.
\end{enumerate}

\end{appendix}  

\end{document}